\theoremstyle{plain}
\newtheorem{theorem}{Theorem}
\newtheorem{lemma}[theorem]{Lemma}
\newtheorem{corollary}[theorem]{Corollary}
\theoremstyle{definition}
\newtheorem{definition}[theorem]{Definition}
\newtheorem{example}[theorem]{Example}
\theoremstyle{remark}
\knowledge{\preshC}{mathsymb}
\knowledge{\DiagC}{mathsymb}
\knowledge{\MonoDiagC}{mathsymb}
\knowledge{\Set}{mathsymb}
\knowledge{\El}{mathsymb}
\knowledge{\X}{mathsymb}
\knowledge{\Y}{mathsymb}
\knowledge{\Fhat}{mathsymb}
\knowledge{\Ghat}{mathsymb}
\knowledge{\repr}{mathsymb}
\knowledge {\initialInitialStateL }{}
\knowledge {\initialFinalMapL }{}
\knowledge {\initialTransitionMapL }{}
\knowledge {\initialStateObjectL }{}
\knowledge {\finalInitialStateL }{}
\knowledge {\finalTransitionMapL }{}
\knowledge {\finalFinalMapL }{}
\knowledge {\finalStateObjectL }{}
\newenvironment{dani-comm}{
 \color{red}
  { D's comment:
  }
  }
\knowledge\C{}
\knowledge\D{}
\knowledge\E{}
\knowledge\S{}
\knowledge {\EpiS }{}
\knowledge {\MonoS }{}
\knowledge {\Mono }{}
\knowledge {\Epi }{}
\knowledge {\EpiAutoS }{}
\knowledge {\MonoAutoS }{}
\newrobustcmd{\C}{\kl[\C]{\mathcal{C}}} 
\newrobustcmd{\D}{\kl[\D]{\mathcal{D}}} 
\newrobustcmd{\E}{\kl[\E]{\mathcal{E}}} 
\newrobustcmd{\preshC}{\kl[\preshC]{[\mathcal{C}^\op,\mathsf{Set}]}} 
\newrobustcmd{\El}[1]{\kl[\El]{\mathit{El}({#1})}}
\newrobustcmd{\Elmono}[1]{\kl[\Elmono]{\mathit{El}_{\mathit{Mono}}({#1})}}
\renewrobustcmd{\S}{\kl[\S]{\mathcal{S}}}
\knowledge\Diag{math}
\knowledge\DiagFin{math}
\newrobustcmd\Diag[1]{\kl[\Diag]{\mathsf{Diag}}(#1)}
\newrobustcmd\DiagFin[1]{\kl[\DiagFin]{\mathsf{DiagFin}_\fin}(#1)}
\newrobustcmd\DiagC{\Diag{\categoryC}} 
\newrobustcmd\DiagS{\Diag\categoryS}
\newrobustcmd\DiagFinS{\DiagFin\categoryS}
\knowledge\Glue{math}
\knowledge\GlueFin{math}
\newrobustcmd\Glue{\kl[\Glue]{\mathtt{Glue}}}
\newrobustcmd\GlueFin{\kl[\GlueFin]{\mathtt{Glue}_{\fin}}}
\newrobustcmd\GlueC{\Glue(\categoryC)}
\newrobustcmd\GlueS{\Glue(\categoryS)}
\newrobustcmd\GlueFinS{\GlueFin(\categoryS)}
\newrobustcmd\MonoDiag[1]{\Glue(#1)}
\newrobustcmd\MonoDiagFin[1]{\GlueFin(#1)}
\let\MonoDiagC\GlueC
\let\MonoDiagFinS\GlueFinS
\newrobustcmd{\X}{\kl[\X]{X}} 
\newrobustcmd{\Y}{\kl[\Y]{\mathcal{Y}}} 
\newrobustcmd{\Fhat}{\kl[\Fhat]{\widehat{F}}} 
\newrobustcmd{\Ghat}{\kl[\Ghat]{\widehat{\functorG}}} 
\newrobustcmd\moustache[1]{\kl[\Ghat]{\widehat{#1}}}
\newrobustcmd{\repr}[1]{\kl[\repr]{\mathcal{C}(-,{#1})}} 
\newrobustcmd{\op}{\mathit{op}}
\newrobustcmd{\colim}{\mathsf{colim}}
\newrobustcmd\Reals{\mathbb{R}}
\newrobustcmd\equivf[1]{\mathrel{\kl[moustache]{\sim_{#1}}}}
\newrobustcmd\equivG{\mathrel{\kl[moustache]{\sim_{\functorG}}}}
\newrobustcmd\equivF{\mathrel{\kl[moustache]{\sim_{\functorF}}}}
\newrobustcmd\equivK{\mathrel{\kl[moustache]{\sim_{\functorK}}}}
\newrobustcmd\equivH{\mathrel{\kl[moustache]{\sim_{\functorH}}}}
\newrobustcmd\colimG{\colim{\functorG}}
\newrobustcmd\fieldK{\mathbb{K}}
\knowledge\exVectorMap{math}
\newcommand\exVectorMap{\kl[\exVectorMap]{F}}
\knowledge\automatonVec{math}
\newcommand\vectorial{{\kl[\automatonVec]{\mathtt{vec}}}}
\newrobustcmd\automatonVec{\automatonA^\vectorial}
\newrobustcmd\statesVec{Q^\vectorial}
\newrobustcmd\finalMapVec{f^\vectorial}
\newrobustcmd\initialStateVec{i^\vectorial}
\newrobustcmd\transitionMapVec{\delta^\vectorial}
\knowledge{\automatonDuvs}{math}
\newcommand\duvs{{\kl[\automatonDuvs]{\mathtt{duvs}}}}
\newrobustcmd\automatonDuvs{\automatonA^\duvs}
\newrobustcmd\statesDuvs{Q^\duvs}
\newrobustcmd\finalMapDuvs{f^\duvs}
\newrobustcmd\initialStateDuvs{i^\duvs}
\newrobustcmd\transitionMapDuvs{\delta^\duvs}
\knowledge\autosem{math}
\newrobustcmd\autosem[1]{\kl[\autosem]{[\![}#1\kl[\autosem]{]\!]}}
\newrobustcmd\alphabet[1][A]{\mathtt{#1}}
\newrobustcmd\alphabetA{\alphabet[A]}
\newrobustcmd\automaton[1]{\mathcal{#1}}
\newrobustcmd\automatonA{\automaton A}
\newrobustcmd\automatonB{\automaton B}
\newrobustcmd\automatonC{\automaton C}
\knowledge{\stateObject}{mathsymb}
\knowledge{\initialState}{mathsymb}
\knowledge{\finalMap}{mathsymb}
\newrobustcmd\stateObject{\kl[\stateObject]{Q}}
\newrobustcmd\initialState{\kl[\initialState]{i}}
\newrobustcmd\finalMap{\kl[\finalMap]{f}}
\newrobustcmd\transitionMap{\delta}
\newrobustcmd\initialStateObjectL{\kl[\initialStateObjectL]{Q^{\texttt{init}(L)}}}
\newrobustcmd\initialInitialStateL{\kl[\initialInitialStateL]{i^{\texttt{init}(L)}}}
\newrobustcmd\initialFinalMapL{\kl[\initialFinalMapL]{f^{\texttt{init}(L)}}}
\newrobustcmd\initialTransitionMapL{\kl[\initialTransitionMapL]\delta^{\kl[\initialTransitionMapL]{\texttt{init}(L)}}}
\newrobustcmd\finalStateObjectL{\kl[\finalStateObjectL]{Q^{\texttt{final}(L)}}}
\newrobustcmd\finalInitialStateL{\kl[\finalInitialStateL]{i^{\texttt{final}(L)}}}
\newrobustcmd\finalFinalMapL{\kl[\finalFinalMapL]{f^{\texttt{final}(L)}}}
\newrobustcmd\finalTransitionMapL{\kl[\finalTransitionMapL]\delta^{\kl[\finalTransitionMapL]{\texttt{final}(L)}}}
\newrobustcmd\morphismAutomataH{h}
\knowledge\initialCautomaton{math}
\knowledge\finalCautomaton{math}
\newrobustcmd\initialCautomaton{\kl[\initialCautomaton]{\mathtt{init}_{\categoryC}}}
\newrobustcmd\finalCautomaton{\kl[\finalCautomaton]{\mathtt{final}_{\categoryC}}}
\newrobustcmd\initialCautomatonL{\initialCautomaton(L)}
\newrobustcmd\finalCautomatonL{\finalCautomaton(L)}
\knowledge\categoryAutomataL{math}
\knowledge {\categoryCautomataL }{synonym}
\newrobustcmd\categoryCautomataL{\kl[\categoryAutomataL]{\mathsf{Auto}_{\categoryC}(L)}}
\newrobustcmd\categorySautomataL{\kl[\categoryAutomataL]{\mathsf{Auto}_{\categoryS}(L)}}
\newrobustcmd\categoryCautomataLOp{\kl[\categoryAutomataL]{\mathsf{Auto}_{\categoryCOp}(L^\op)}}
\knowledge\reachSautomaton{math}
\knowledge\obsSautomaton{math}
\newrobustcmd\reachSautomaton{\kl[\reachSautomaton]{\mathtt{reach}_{\categoryS}}}
\newrobustcmd\reachSautomatonA{\reachSautomaton(\automatonA)}
\newrobustcmd\obsSautomaton{\kl[\obsSautomaton]{\mathtt{obs}_{\categoryS}}}
\newrobustcmd\obsSautomatonA{\obsSautomaton(\automatonA)}
\knowledge\minimalSautomaton{math}
\newrobustcmd\minimalSautomaton{\kl[\minimalSautomaton]{\mathtt{min}_{\categoryS}}}
\newrobustcmd\minimalSautomatonL{\minimalSautomaton(L)}
\let\reachAutomatonA\reachSautomatonA
\newrobustcmd\obsReachAutomatonA{\obsAutomaton(\reachAutomaton(\automatonA))}
\let\minAutomatonL\minimalSautomatonL
\NewDocumentCommand\rightmonoarrow{o}{\IfNoValueTF{#1}
     {\rightarrowtail}
     {\stackrel{#1}{\rightarrowtail}}}
\newcommand\fin{{\mathsf{fin}}}
\knowledge\categorySet{}
\knowledge\categorySetFin{}
\knowledge\categoryVec{}
\knowledge\categoryVecFin{}
\knowledge\categoryTop{}
\knowledge\categoryComp{}
\newrobustcmd\categorySet{\kl[\categorySet]{\mathsf{Set}}} 
\let\Set\categorySet
\newrobustcmd\categorySetFin{\kl[\categorySetFin]{\mathsf{Set}_\fin}} 
\let\SetFin\categorySetFin
\newrobustcmd\categoryVec{\kl[\categoryVec]{\mathsf{Vec}}} 
\let\Vec\categoryVec
\newrobustcmd\categoryVecFin{\kl[\categoryVecFin]{\mathsf{Vec}_\fin}} 
\let\VecFin\categoryVecFin
\newrobustcmd\categoryTop{\kl[\categoryTop]{\mathsf{Top}}}
\newrobustcmd\categoryComp{{\kl[\categoryTop]{\mathsf{Comp}}}}
\newrobustcmd\category[1]{\kl[\category#1]{\mathcal{#1}}}
\newrobustcmd\categoryC{\category C}
\newrobustcmd\categoryCOp{\category C^\op}
\newrobustcmd\categoryD{\category D}
\newrobustcmd\categoryE{\category E}
\newrobustcmd\categoryS{\category S}
\newrobustcmd\functor[1]{\kl[\functor#1]{\mathtt{#1}}}
\newrobustcmd\functorF{\functor F}
\newrobustcmd\functorG{\functor G}
\newrobustcmd\functorH{\functor H}
\newrobustcmd\functorK{\functor K}
\newrobustcmd\functorLF{\functor{LF}}
\newrobustcmd\functorCategoryDS{[\categoryD,\categoryS]}
\newrobustcmd\functorCategoryDC{[\categoryD,\categoryC]}
\newrobustcmd\Epi{\ensuremath{\kl[\Epi]{E}}}
\newrobustcmd\EpiS{\ensuremath{\kl[\EpiS]{E_{\categoryS}}}}
\newrobustcmd\Mono{\ensuremath{\kl[\Mono]M}}
\newrobustcmd\MonoS{\ensuremath{\kl[\MonoS]{M_{\categoryS}}}}
\newrobustcmd\EpiMDC{\ensuremath{\kl[\EpiMDC]{\mathit{Epi}_\mathit{MDC}}}}
\newrobustcmd\MonoMDC{\ensuremath{\kl[\MonoMDC]{\mathit{Mono}_\mathit{MDC}}}}
\newrobustcmd\EpiDS{\ensuremath{\kl[\EpiDS]{E_{[\categoryD,\categoryS]}}}}
\newrobustcmd\MonoDS{\ensuremath{\kl[\MonoDS]{M_{[\categoryD,\categoryS]}}}}
\newrobustcmd\EpiAutoS{\ensuremath{\kl[\EpiAutoS]{E_{\categorySautomataL}}}}
\newrobustcmd\MonoAutoS{\ensuremath{\kl[\MonoAutoS]{M_{\categorySautomataL}}}}
\knowledge\EpiGlueC{math}
\knowledge\MonoGlueC{math}
\newrobustcmd\EpiGlueC{\ensuremath{\kl[\EpiGlueC]{\mathit{Epi}_{\GlueC}}}}
\newrobustcmd\MonoGlueC{\ensuremath{\kl[\MonoGlueC]{\mathit{Mono}_{\GlueC}}}}
\let\EpiMDC\EpiGlueC
\let\MonoMDC\MonoGlueC
\newrobustcmd\id{\mathit{id}}
\title{Automata in the Category of Glued Vector Spaces\footnote{This
    work was supported by the European Research Council (ERC) under
    the European Union’s Horizon 2020 research and innovation
    programme (grant agreement No.670624), and by the DeLTA ANR
    project (ANR-16-CE40-0007). The authors also thank the Simons
    Institute for the Theory of Computing where this work has been
    partly developped.} \footnote{This is the "knowledge@knowledged" 
    enriched version of the same paper published in the proceedings of MFCS 2017.}} 
\author{%
	\begin{tabular}{c}%
	Thomas Colcombet\\
	\texttt{thomas.colcombet@irif.fr}%
	\end{tabular}
	\begin{tabular}{c}%
	Daniela Petri{\c s}an\\
	\texttt{daniela.petrisan@irif.fr}%
	\end{tabular}\\[0.6cm]
	CNRS, IRIF, Univ. Paris-Diderot, Paris 7, France}
\begin{document}

\maketitle

\begin{abstract}
  In this paper we adopt a category-theoretic approach to the
  conception of automata classes enjoying minimization by design. The
  main instantiation of our construction is a new class of automata
  that are hybrid between deterministic automata and automata weighted
  over a field.
\end{abstract}

\section{Introduction}
\label{section:introduction}

In this paper we introduce a new automata model,\emph{
  "hybrid set-vector automata"}, designed to accept weighted languages
over a field in a more efficient way than Schützenberger's weighted
automata~\cite{schutzenberger61}. The space of states for these
automata is not a vector space, but rather a union of vector spaces 
``glued'' together along subspaces.  We call them "hybrid automata",
since they naturally embed both \kl{deterministic finite state
  automata} and \kl{finite automata weighted over a field}. In
Section~\ref{section:informal presentation} we present at an informal
level a motivating example and the intuitions behind this
construction, avoiding as much as possible category-theoretical
technicalities. We use this example to guide us throughout the rest of
the paper.

A key property that the new automata model should satisfy is
minimization. Since the morphisms of ``glued'' vector spaces are
rather complicated to describe, proving the existence of minimal
automata ``by hand'' is rather complicated. Therefore we opted for a
more systematic approach and adopted a category-theoretic perspective
for designing \emph{new forms of automata} that enjoy
\emph{minimization by design}. In particular, we introduce the
category of ``glued'' vector spaces in which these automata should
live and we analyse its properties that render minimization possible.

Starting with the seminal papers of Arbib and Manes, see for
example~\cite{ArbibManes75} and the references therein, and of
Goguen~\cite{goguen1972}, it became well established that category
theory offers a neat understanding of several phenomena in automata
theory. In particular, the key property of minimization in different
contexts, such as for deterministic automata (over finite words) and
Schützenberger's automata weighted over fields
\cite{schutzenberger61}, arises from the same categorical reasons
(existence of some limits/colimits and an (epi,mono)-factorization
system \cite{ArbibManes75}).

There is a long tradition of seeing automata either as algebras or
coalgebras for a functor. However, in the case of deterministic
automata, the algebraic view does not capture the accepting states,
while the coalgebraic view does not capture the initial state. In the
coalgebraic setting one needs to consider the so-called pointed
coalgebras, see for example~\cite{AdamekEtAl:Well-Pointed-CoAlg},
where minimal automata are modelled as well-pointed coalgebras.  The
dual perspective of automata seen as both algebras or coalgebras, as
well as the duality between reachability and observability, has been
explored more recently in papers such
as~\cite{BonchiBHPRS14,Bonchi2012,Bezhanishvili2012}.

Here we take yet another approach to defining automata in a
category. The reader acquainted with category theory will recognise
that we see automata as functors from an input category (that
specifies the type of the machines under consideration, which in this
paper is restricted to word automata) to a category of output values.
We show that the next ingredients are sufficient to ensure
minimization: the existence of an initial and of a final automaton for
a language, and a factorization system on the category in which we
interpret our automata.

For example, deterministic and weighted automata over a field are
obtained by considering as output categories the categories $\Set$ of
sets and functions and $\Vec$ of vector spaces and linear maps,
respectively. Since $\Set$ and $\Vec$ have all limits and colimits, it
is very easy to prove the existence of initial and final automata
accepting a given language. In both cases, the minimal automaton for a
language is obtained by taking an epi-mono factorization of the unique
arrow from the initial to the final automaton.

Notice that the initial and the final automata have infinite
(-dimensional) state sets (spaces). If the language at issue is
regular, that is, if the unique map from the initial to the final
automaton factors through a finite (-dimensional) automaton then,
automatically, the minimal automaton will also be finite
(-dimensional).  However, this relies on very specific properties of
the categories of sets and vector spaces, namely on the fact that the
full subcategories $\categorySetFin$ of finite sets and
$\categoryVecFin$ of finite dimensional vector spaces are closed in
$\categorySet$, respectively in $\categoryVec$, under both quotients
and subobjects.

Coming back to "hybrid set-vector automata", we define them as word
automata interpreted in an output category $\MonoDiag\Vec$ which we
obtain as the completion of~$\categoryVec$ under certain colimits, and
can be described at an informal level as ``glueings'' of arbitrary
vector spaces. The definition of this form of cocompletion
$\MonoDiagC$ of a category $\categoryC$ is the subject of
Section~\ref{section:gluing of categories}.

We are interested in those "hybrid automata" for which the state
object admits a finitary description, which intuitively can be
described as \emph{finite} glueings of \emph{finite dimensional}
spaces. For this reason we will consider the subcategory
$\GlueFin(\VecFin)$ of $\MonoDiag\Vec$. It turns out that
$\GlueFin(\VecFin)$ is closed under quotients in $\MonoDiag\Vec$ but,
crucially, \emph{it is not closed under subobjects}. For example, a
glueing of infinitely many one-dimensional spaces is a subobject of a
two-dimensional space, but only the latter is an object of
$\GlueFin(\VecFin)$.

This is the motivation for introducing a notion of
\reintro{($\EpiS$,$\MonoS$)-factorization of a category~$\categoryC$
  through a subcategory~$\categoryS$}. This is a refinement of the
classical notion of factorization system on $\categoryC$ and is used
for isolating the semantical computations (in~$\categoryC$) from the
automata themselves (with an object from~$\categoryS$ as ``set of
configurations'').\footnote{This distinction is usually not necessary,
  and we are not aware of its existence in the literature. It is
  crucial for us, thus we cannot use already existing results from the
  coalgebraic literature, e.g.~\cite{AdamekEtAl:Well-Pointed-CoAlg}.}
We show how it provides a minimization of ``\kl{$\categoryS$-automata}
for representing \kl{$\categoryC$-languages}''.  A concrete instance
of this is a factorization system on $\MonoDiag\Vec$ through
$\MonoDiagFin\VecFin$, which plays a crucial role in proving the
existence of minimal $\MonoDiagFin\VecFin$-automata for recognizing
weighted languages.

The rest of the paper is organised as follows. We first develop a
motivating example of a \kl{hybrid set-vector space automaton} in
Section~\ref{section:informal presentation}. We then identify in
Section~\ref{section:categorical automata} the category-theoretic
ingredients that are sufficient for a class of automata to enjoy
minimization. We then turn to our main contribution, namely the
description and the study of the properties of (finite-)mono-diagrams
in a category, in Section~\ref{section:gluing of categories}.  We
conclude in Section~\ref{section:conclusion} with a discussion of some
of the design choices we made in this paper.

\section{The hybridisation of deterministic finite state and vector
  automata}
\label{section:informal presentation}
\newcommand\mysubparagraph[1]{\smallskip\noindent\textbf{#1}}
\let\subparagraph\mysubparagraph

In this section, we (rather informally) describe the motivating
example of this paper: the construction of a family of automata that
naturally extends both \kl{deterministic finite state automata} and
\kl{finite automata weighted over a field} in the sense of
Schützenberger (i.e., automata in the category of finite vector
spaces).  The intuition should then support the categorical
constructions that we develop in the subsequent sections.

\subparagraph{Set automata (deterministic automata).} \AP Let us fix
ourselves an alphabet~$\alphabetA$.
A ""deterministic automaton""
(or \reintro{set automaton}) is a tuple
\begin{align*}
  \automatonA=(\stateObject,\initialState, \finalMap,
  (\transitionMap_a)_{a\in\alphabet})\,,
\end{align*}
in which $\stateObject$ is a \intro{set of states}, $\initialState$ is
a map from a one element set~$1=\{0\}$ to~$\stateObject$ (i.e. an
\intro{initial state}), $\finalMap$ is a map from~$\stateObject$ to a
two elements set~$2=\{0,1\}$ (i.e. a \intro{set of accepting states}),
and~$\transitionMap_a$ is a map from~$\stateObject$ to itself for all
letters~$a\in\alphabet$. \AP Given a word~$u=a_1\ldots a_n$, the
automaton \reintro{accepts} the
map~$\autosem\automatonA(u)\colon1\rightarrow2$ defined as:
\begin{align*}
  \autosem\automatonA(u)&=\finalMap\circ\transitionMap_u\circ\initialState
  &\text{where}\quad\transitionMap_{a_1\dots
    a_n}&=\transitionMap_{a_n}\circ\cdots\circ\transitionMap_{a_1}\,.
\end{align*}
We recognize here the standard definition of a deterministic
automaton, in which a word~$u$ is accepted if the
map~$\autosem\automatonA(u)$ is the constant~$1$, and rejected if it
is the constant~$0$.

\subparagraph{Vector space automata (automata weighted over a field).}
\AP
Now, we can use the same definition of an automaton, this time
with~$\stateObject$ a vector space (over, say, the field $\Reals$),
$\initialState$ a linear map from $\Reals$ to~$\stateObject$,
$\finalMap$ a linear map from~$\stateObject$ to~$\Reals$ (seen as a
$\Reals$-vector space as usual), and $\transitionMap_a$ a linear map
from~$\stateObject$ to itself. In other words, we have used the same
definition, but this time in the category of vector spaces. Given a
word~$u$, a \kl{vector space automaton} $\automatonA$ computes
$\autosem\automatonA(u)\colon \Reals\rightarrow \Reals$ as the
composite described above. Since a linear map from~$\Reals$
to~$\Reals$ is only determined by the image of~$1$, this automaton can
be understood as associating to each input word~$u$ the real
number~$\autosem\automatonA(u)(1)$.  We will informally refer to such
automata in this section as \intro{vector space automata}. Let us
provide an example.

\subparagraph{Leading example.} \AP For a word $u\in \{a,b,c\}^*$ let
$|u|_a$ denote the number of occurrences of the letter $a$ in $u$.
Let us compute the map~$\intro*\exVectorMap$ which, given a word~$u\in
\{a,b,c\}^*$, outputs~$2^{|u|_a}$ if it contains an even number
of~$b$'s and no~$c$'s, and $0$ in all other cases. This is achieved
with the \kl{vector space automaton}
\AP\intro[\automatonVec]{}$\automatonVec=(\statesVec=\Reals^2,\initialStateVec,\finalMapVec,\transitionMapVec)$
where for all~$x,y\in\Reals$,
\begin{align*}
  \initialStateVec(x)&=(x,0)\,, & \transitionMapVec_a(x,y)&=(2x,2y)\,,& \transitionMapVec_b(x,y)&=(y,x)\,,\\
  \finalMapVec(x,y)&=x\,,& \transitionMapVec_c(x,y)&=(0,0)\,.
\end{align*}
One easily checks that
indeed~$\autosem\automatonVec(u)(1)=\exVectorMap(u)$ for all
words~$u\in\alphabetA^*$.

\subparagraph{Can we do better?} \AP It is well known from
Schützenberger's seminal work that the \kl{vector space
  automaton}~$\automatonVec$ is minimal, both in an algebraic sense
(to be described later) as well as at an intuitive level in the sense
that no \kl{vector space automaton} could recognize $\exVectorMap$
with a dimension one vector space as configuration space:
$\automatonVec$ is ``dimension minimal.''

However, let us think for one moment on how one would ``implement''
the function~$\exVectorMap$ as an online device that would get letters
as input, and would modify its internal state accordingly. Would we
implement concretely~$\automatonVec$ directly? Probably not, since
there is a more economic\footnote{Under the reasonable assumption that
  maintaining a real is more costly than maintaining a bit.} way to
obtain the same result: we can maintain $2^m$ where~$m$ is the number
of~$a$'s seen so far, together with one bit for remembering whether
the number of~$b$'s is even or odd.  Such an automaton would start
with~$1$ in its unique real valued register. Each time an~$a$ is met,
the register is doubled, each time~$b$ is met, the bit is reversed,
and when~$c$ is met, the register is set to~$0$. At the end of the
input word, the automaton would output~$0$ or the value of the
register depending on the current value of the bit. If we consider the
configuration space that we use in this encoding, we use
$\Reals\uplus\Reals$ instead of~$\Reals\times\Reals$. Can we define an
automata model that would faithfully implement this example?

\subparagraph{A first generalization: \kl{disjoint unions of vector
    spaces}.} \AP A way to achieve this is to interpret the generic
notion of \kl{automata} in the category of \kl{finite disjoint unions
  of vector spaces} (\intro{duvs}).  One way to define such a
\intro{finite disjoint unions of vector spaces} is to use a finite
set~$N$ of `\intro{indices}'~$p,q,r\dots$, and to each \kl{index}~$p$
associate a vector space~$V_p$. The `space' represented is then
$\{(p,\vec v)\mid p\in N,\,\vec v\in V_p\}$.  A `map' between
\kl{duvs} represented by~$(N,V)$ and~$(N',V')$ is then a
pair~$h:N\rightarrow N'$ together with a linear map~$f_p$ from~$V_p$
to~$V'_{h(p)}$ for all~$p\in N$. It can be seen as mapping each
$(p,\vec v)\in N\times V_p$ to $(h(p),f_p(\vec v))$. Call this a
\intro{duvs map}.  Such \kl{duvs maps} are composed in a natural way.
This defines a category, and hence we can consider \intro{duvs
  automata} which are automata with a \kl{duvs} for its state space,
and transitions implemented by \kl{duvs maps}.

\AP For instance, we can pursue with the computation of~$\exVectorMap$
and provide a \kl{duvs automaton}
\intro[\automatonDuvs]{}$\automatonDuvs=(\statesDuvs,\initialStateDuvs,\finalMapDuvs,\transitionMapDuvs)$
where $\statesDuvs=\{(s,x)\mid s\in\{\mathtt{even},\mathtt{odd}\},\
x\in\Reals\}$ (considered as a \kl{disjoint union of vector spaces}
with \kl{indices} $\mathtt{even}$ and $\mathtt{odd}$ and all
associated vector spaces~$V_\mathtt{even}=V_\mathtt{odd}=\Reals$).
The maps can be conveniently defined as follows:
\begin{align*}
  \initialStateDuvs(x)&=(\mathtt{even},x)&
  \transitionMapDuvs_a(\mathtt{even},x)&=(\mathtt{even},2x)&
  \transitionMapDuvs_a(\mathtt{odd},x)&=(\mathtt{odd},2x)\\
  \finalMapDuvs(\mathtt{even},x)&=x&
  \transitionMapDuvs_b(\mathtt{even},x)&=(\mathtt{odd},x)&
  \transitionMapDuvs_b(\mathtt{odd},x)&=(\mathtt{even},x)\\
  \finalMapDuvs(\mathtt{odd},x)&=0&
  \transitionMapDuvs_c(\mathtt{even},x)&=(\mathtt{even},0)&
  \transitionMapDuvs_c(\mathtt{odd},x)&=(\mathtt{odd},0)
\end{align*}
This automaton computes the expected~$\exVectorMap$. It is also
obvious that such \kl{automata over finite disjoint unions of vector
  spaces} generalize both \kl{deterministic finite state automata}
(using only $0$-dimensional vector spaces), and \kl{vector space
  automata} (using only one \kl{index}).  However, is it the joint
generalization that we hoped for? The answer is no...

\subparagraph{Minimization of "duvs" automata.} \AP We could think
that the above automaton~$\automatonDuvs$ is minimal. However, it
involved some arbitrary decisions when defining it. This can be seen
in the fact that when~$\transitionMapDuvs_c$ is applied, we chose to
not change the index (and set to null the real value): this is
arbitrary, and we could have exchanged $\mathtt{even}$ and
$\mathtt{odd}$, or fixed it abitrarily to $\mathtt{even}$, or to
$\mathtt{odd}$. All these \intro{variants} would be equally valid for
computing~$\exVectorMap$.

\AP It is a bit difficult at this stage to explain the non-minimality
of these automata since we did not introduce the proper notions
yet. Let us try at a high level, invoking some standard
automata-theoretic concepts.  The first remark is that every
configuration in $\statesDuvs$ is `reachable' in this automaton:
indeed $(\mathtt{even},x)=\initialStateDuvs(x)$ and
$(\mathtt{odd},x)=\transitionMapDuvs_b\circ\initialStateDuvs(x)$ for
all~$x\in\Reals$. Hence there is no hope to improve the automaton
$\automatonDuvs$ or one of its \kl{variants} by some form of
`restriction to its reachable configurations'. Only `quotienting of
configurations' remains. However, one can show that none among
$\automatonDuvs$ and the variants mentioned above is the quotient of
another. Keeping in mind the Myhill-Nerode equivalence, we should
instead merge the configurations $(\mathtt{even},0)$ and
$(\mathtt{odd},0)$ since these are observationally equivalent:
\begin{align*}
  \finalMapDuvs\circ\transitionMapDuvs_u(\mathtt{even},0)&= 0 =
  \finalMapDuvs\circ\transitionMapDuvs_u(\mathtt{odd},0)\ &&\text{for
    all words~$u\in\alphabetA^*$.}
\end{align*}
However, the quotient "duvs" obtained by merging $(\mathtt{even},0)$
and $(\mathtt{odd},0)$, albeit not very intuitive, consists of one
index associated to a two dimensional vector space, which is
essentially an indexed version of the "vector space automaton"
$\automatonVec$ computed before.  At this stage, we understand that
minimising in the category of "duvs" is not very helpful, as we do not
obtain the desired optimisation.

\subparagraph{How to proceed from here.}  The only reasonable thing to
do is indeed to merge~$(\mathtt{even},0)$ and $(\mathtt{odd},0)$, but
we have to be more careful about the precise meaning of `quotient'.  A
possibility is to add explicitly equivalence classes in the definition
of the automaton.  However, category theory provides useful concepts
and terminology for defining these objects: colimits, and more
precisely the free co-completion of a category. In the previous
paragraph, we have shown that the category of "duvs" -- which is
itself the free completion of $\categoryVec$ with respect to finite
coproducts -- is not a good ambient category for our purposes. We need
more colimits, so that the notion of `quotient' is further refined. At
the other extreme, we could consider the free completion with respect
to all colimits, which, informally, consists of objects obtained
from the category using copying and gluing. We will explain later in
Section~\ref{section:conclusion} why we choose to not use this
completion.  Intuitively, by adding all colimits we glue the vector
spaces ``too much'', and not only we loose a geometric intuition of
the objects we are dealing with, but we may run into actual technical
problems when it comes the existence of minimal automata.

\AP Instead, we restrict our attention to a class of colimits (which
strictly contains coproducts) for which different spaces in the
colimit can be ``glued'' together along subspaces, but which do not
contain implicit self folding (i.e., such that an element of a vector
space is not glued to a distinct element of the same vector space,
directly or indirectly). E.g., we can describe `two one-dimensional
spaces, the 0-dimensional subspaces of which are identified through a
linear bijection'.  In this way we obtain the new category of ""glued
vector spaces"" and ""hybrid set-vector space automata"",
corresponding to $\MonoDiag\Vec$"-automata@automaton in a category" in
the rest of the paper.

Generic arguments of colimits provide the language for describing
these objects, but do not solve the question of minimality. In
particular, we are interested in automata whose space of
configurations is a \emph{finite} colimit belonging to the class
described above. The categorical development in this work addresses
the minimization problem for hybrid automata.

\subparagraph{An intuition in the case of gluing of vector spaces.}
In the case of gluing of vector spaces, it is possible to isolate a
combinatorial statement that plays a crucial role in the existence of
minimal \kl{hybrid set-vector automata}:
\begin{quote}
  \AP \intro[minimal cover vs]{(a)} Any subset of a finite-dimensional
  vector space admits a minimal cover as a finite union of subspaces.
  \AP \intro[uniqueness cover vs]{(b)} Furthermore, there is a unique
  such cover which is a union of subspaces which are incomparable with
  respect to inclusion.
\end{quote}
\noindent\begin{minipage}{7.3cm}
For instance, in the original \kl{vector space
  automaton}~$\automatonVec$, the states that are reachable in fact
all belong to $\Reals\times\{0\}\cup\{0\}\times\Reals$, and this is
the \kl[minimal cover vs]{minimal cover as in (a)} of these reachable
configurations. This subset of~$\Reals^2$ has the structure of two
$\Reals$-spaces. These happen to intersect at~$(0,0)$, hence it is
necessary to glue them at~$0$ to faithfully represent this set of
reachable configurations. Thanks to \kl[uniqueness cover vs]{(b)} this
decomposition is canonical, and hence can be used for describing the
automaton.
\end{minipage}
~~~
\raisebox{-1.9cm}{
\begin{tikzpicture}[scale=0.6,>=stealth', thick, dot/.style = {
    fill=orange, inner sep = 5pt}]
  \def \step {0.3} 

  \draw[help lines,step=\step] (-1,-1) grid (6,6);
  \draw[-,line width=1.3pt] (-1,0) -- (6,0); 
  \draw[-,line width=1.3pt] (0,-1) -- (0,6); 

  \node[draw, circle,inner sep=1.5pt, fill=orange] at (0,0){}; 
  \foreach \s in {0,...,4}{ 
    \node[draw, circle,inner sep=1.5pt, fill=orange] at (2^\s
    *\step,0){}; 
    \node[draw, circle,inner sep=1.5pt, fill=orange] at
    (0,2^\s *\step){}; }
\end{tikzpicture}
}

\section{Automata in a category}
\label{section:categorical automata}
In this section, we provide the general definition for a \kl{(finite
  word) automaton in a category}.
We also isolate properties guaranteeing the existence of minimal
automata.  Though presented differently, the material in the first
subsection is essentially a slight variation around the work of Arbib
and Manes~\cite{ArbibManes75}, which introduced a notion of automaton
in a category and, moreover, highlighted the connection between
factorization systems of the ambient category, duality and
minimization.  In the remaining subsections we develop a refinement of
this approach to minimization, and introduce a notion of factorization
system through a subcategory.

\subsection{\kl{Automata in a category}, \kl{initial automaton}, \kl{final
  automaton}}
\label{subsection:automata}
\begin{definition}[automata]
  \AP Let~$\categoryC$ be a locally small category, $I$ and~$F$ be
  objects of~$\categoryC$, and $\alphabetA$ be some alphabet.  \AP An
  \intro{automaton $\automatonA$ in the category~$\categoryC$} (over
  the alphabet~$\alphabetA$), for short a
  \reintro{$\categoryC,I,F$-automaton} (or simply
  \reintro{$\categoryC$-automaton} when~$I$ and~$F$ are obvious in the
  context), is a tuple $(\stateObject,\initialState, \finalMap,
  \transitionMap)$, where $\intro[\stateObject]{}\stateObject$ is an
  object in $\categoryC$ (called the \intro{state object}),
  $\intro*\initialState\colon I\to \stateObject$ and
  $\intro*\finalMap\colon \stateObject\to F$ are morphisms in
  $\categoryC$ (called \intro[initial morphism]{initial} and
  \intro{final morphisms}), and
  $\transitionMap\colon\alphabetA\to\categoryC(\stateObject,\stateObject)$
  is a function associating to each letter $a\in\alphabetA$ a morphism
  $\delta_a\colon\stateObject\to\stateObject$ in $\categoryC$.  We
  extend the function $\delta$ to $\alphabetA^*$ as with
  $\delta_\epsilon$ being the identity morphism on $\stateObject$ and
  $\delta_{wa}=\delta_a\circ\delta_w$ for all $a\in\alphabetA$ and
  $w\in\alphabetA^*$.

  \AP A \intro{morphism of $\categoryC,I,F$-automata}
  $\morphismAutomataH\colon\automatonA\to\automatonA'$ is a morphism
  $h\colon\stateObject\to\stateObject'$ in $\categoryC$ between the
  \kl{state objects} which commutes with the initial, final and
  transition morphisms:
  \begin{equation}
    \label{eq:1}
    \begin{tikzcd}[column sep={1.5cm,between origins},row sep={0.7cm,between origins}]
      &  \stateObject\ar[dd,"h"] & \stateObject\ar[r,"\delta_a"]\ar[dd,swap,"h"] & \stateObject\ar[dd,"h"] & \stateObject\ar[rd,"f"]\ar[dd,swap,"h"] &\\
      I\ar[ru,"i"]\ar[rd,swap,"i'"] &&&&& F\\
      & \stateObject' & \stateObject'\ar[r,swap,"\delta_a'"] &
      \stateObject' & \stateObject'\ar[ru,swap,"f'"] &
    \end{tikzcd}
  \end{equation}
\end{definition}
\begin{example} The two guiding instantiation of this definition are
  as follows.  When the category~$\categoryC$ is $\Set$, $I=1$
  and~$F=2$, we recover the standard notion of a deterministic and
  complete automaton (over the alphabet $\alphabetA^*$). In the second
  case, when $\categoryC$ is~$\Vec$ over a base field~$\fieldK$,
  $I=\fieldK$ and $F=\fieldK$, we obtain $\fieldK$-weighted
  automata. Indeed, if $\stateObject$ is isomorphic to $\fieldK^n$ for
  some natural number $n$, then linear maps
  $\initialState\colon\fieldK\to\stateObject$ are in one-to-one
  correspondence with vectors $\fieldK^n$. The same holds for linear
  maps $\finalMap\colon\stateObject\to\fieldK$, hence $\initialState$
  and $\finalMap$ are simply selecting an initial, respectively, a
  final vector.
\end{example}

\begin{definition}[languages and language accepted]
  \AP A \intro{$\categoryC,I,F$-language} (or
  \reintro{$\categoryC$-language} when~$I$ and~$F$ are clear from the
  context) is a function~$L\colon\alphabetA^*\to\categoryC(I,F)$.  We
  say that $\automatonA$ \intro{accepts the language~$L$} if
  $L(w)=\intro*\autosem{\automatonA}(w):=\finalMap\circ
  \transitionMap_w\circ\initialState$ for all $w\in\alphabetA^*$.
  \AP Let $\intro[\categoryCautomataL]{}\categoryCautomataL$ denote
  the \intro{category of~$\categoryC,I,F$-automata for $L$}, that is,
  the category whose objects are \kl{$\categoryC,I,F$-automata} that
  \kl{accept the language~$L$} and whose arrows are \kl{morphisms of
    $\categoryC,I,F$-automata}%
  \footnote{\AP If $\automatonA$ \kl{accepts the language $L$} and
    $h\colon\automatonA\to\automatonA'$ is a \kl{morphism of
      $\categoryC,I,F$-automata}, then $\automatonA'$ also \kl{accepts
      the language~$L$}. Hence, $\categoryCautomataL$ is a `connected
    component' in the category of all
    \kl{$\categoryC,I,F$-automata}.}.
\end{definition}

\begin{restatable}{lem}{initFinalAutomaton}%
  \label{lem:initial-and-final-automata}%
  \AP If the coproduct $\coprod_{w\in\alphabet^*}I$ exists
  in~$\categoryC$, then $\categoryCautomataL$ has an \kl{initial
    object}~$\reintro*\initialCautomatonL$.  If the product
  $\prod_{w\in\alphabet^*}F$ exists in~$\categoryC$, then
  $\categoryCautomataL$ has a \kl{final
    object}~$\reintro*\finalCautomatonL$.
\end{restatable}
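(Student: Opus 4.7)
The plan is to construct $\initialCautomatonL$ explicitly as the ``free reachability'' automaton on the coproduct, and $\finalCautomatonL$ dually on the product. The two arguments are categorically dual, so I describe the initial case in detail and leave the final case as a dualization.

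Let $Q := \coprod_{w \in \alphabetA^*} I$ with coproduct injections $\iota_w\colon I \to Q$, and set $i := \iota_\epsilon$. By the universal property of the coproduct, define $\delta_a\colon Q \to Q$ as the unique morphism satisfying $\delta_a \circ \iota_w = \iota_{wa}$ for every $w \in \alphabetA^*$, and define $f\colon Q \to F$ by $f \circ \iota_w = L(w)$. A straightforward induction on $|w|$ using $\delta_{va} = \delta_a \circ \delta_v$ yields $\delta_w \circ i = \iota_w$, hence $\autosem{\automatonA}(w) = f \circ \delta_w \circ i = f \circ \iota_w = L(w)$, so this automaton, which we take to be $\initialCautomatonL$, belongs to $\categoryCautomataL$.

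For initiality, given any $\automatonA' = (\stateObject', \initialState', \finalMap', \delta')$ in $\categoryCautomataL$, the coproduct universal property again produces a unique $h\colon Q \to \stateObject'$ determined by $h \circ \iota_w := \delta'_w \circ \initialState'$. Verifying that $h$ satisfies the three commutations of~\eqref{eq:1} reduces, after precomposing with each $\iota_w$, to routine identities: the initial square uses $h \circ \iota_\epsilon = \initialState'$; the transition square uses $h \circ \delta_a \circ \iota_w = \delta'_{wa} \circ \initialState' = \delta'_a \circ h \circ \iota_w$; and the final triangle uses the fact that $\automatonA'$ accepts $L$, namely $\finalMap' \circ \delta'_w \circ \initialState' = L(w)$. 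Uniqueness is automatic: any automaton morphism $h$ must satisfy $h \circ \iota_w = h \circ \delta_w \circ i = \delta'_w \circ h \circ i = \delta'_w \circ \initialState'$, which is the defining equation.

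The final automaton is built dually. Take $\finalStateObjectL := \prod_w F$ with projections $\pi_w$, put $\finalFinalMapL := \pi_\epsilon$, and use the product's universal property to define $\finalInitialStateL$ by $\pi_w \circ \finalInitialStateL := L(w)$ and $\delta_a$ by $\pi_w \circ \delta_a := \pi_{aw}$; the unique comparison map $h\colon \stateObject' \to \finalStateObjectL$ from an arbitrary $\automatonA' \in \categoryCautomataL$ is then determined by $\pi_w \circ h := \finalMap' \circ \delta'_w$, and all checks dualize those of the initial case. No serious obstacle arises: the argument is entirely an unwinding of the universal properties. The only mildly delicate point is bookkeeping with the convention $\delta_{wa} = \delta_a \circ \delta_v$, which forces the indexing $\iota_w \mapsto \iota_{wa}$ on the initial side but $\pi_w \circ \delta_a = \pi_{aw}$ on the final side.
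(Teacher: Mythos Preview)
Your construction is correct and is exactly the one the paper describes (the paper only sketches it, giving the $\categorySet$ instance explicitly and saying ``the generalisation of this construction is that the state space is the coproduct/product of $\alphabetA^*$-many copies of $I$/$F$''); your write-up simply fills in the verifications the paper omits. One harmless typo: in your last sentence you write $\delta_{wa} = \delta_a \circ \delta_v$ where you mean $\delta_{wa} = \delta_a \circ \delta_w$.
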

  \intro[\initialCautomaton]{}%
  \intro[\finalCautomaton]{}%
  \phantomintro{initial automaton}%
  \phantomintro{final automaton}%
  In the case of $\categorySet$, these automata are well known. The
  first one has as states~$\alphabetA^*$, as initial
  state~$\varepsilon$, and when it reads a letter~$a$, its maps $w$
  to~$wa$. Its final map sends the state~$w$ to~$L(w)(0)$.  There
  exists one and exactly one morphism from this automaton to each
  automata for the same language.  The generalisation of this
  construction is that the state space is the coproduct of~$A^*$-many
  copies of~$I$.
  The final automaton is known as the automaton of `residuals'.  Its
  set of states are the maps from~$\alphabetA^*$ to~$2$. The initial
  state is~$L$ itself, and when reading a letter~$a$, the state $S$ is
  mapped to $w\mapsto S(aw)$. The final map sends~$S$
  to~$S(\varepsilon)$. The generalisation of this construction is that
  the state space is the product of~$A^*$-many copies of~$F$.

\subsection{Factorizations through a subcategory}
\label{subsection:factorization through}

It is important in the development of this paper to distinguish the
category $\categoryCautomataL$ in which the \kl{initial and final
  automata for a language}~$L$ exist (recall
Lemma~\ref{lem:initial-and-final-automata}) and which contains
`infinite automata', from the subcategory, named~$\categorySautomataL$
that is used for the concrete automata (with state object in
$\categoryS$) which are intended to be algorithmically manageable.  In
this section, we provide the concept of \kl{factorizing through a
  subcategory}, which articulates the relation between these two
categories.

\begin{definition}[factorization through a subcategory]
  \AP Assume $\S$ is a subcategory of $\categoryC$.  An arrow $f\colon
  X\to Y$ in $\categoryC$ is called \intro{$\S$-small} if it factors
  through some object $S$ of $\S$, that is, $f$ is the
  composite \begin{tikzcd} X\ar[r, "u"]& S\ar[r, "v"] & Y
  \end{tikzcd}
  for some $u\colon X\to S$ and $v\colon S\to Y$.

  \AP A \intro{factorization system through $\S$ on $\C$} (or simply a
  \intro{factorization system on~$\categoryC$}
  if~$\categoryC=\categoryS$) is a pair $(\EpiS,\MonoS)$ where $\EpiS$
  and $\MonoS$ are classes of arrows in $\categoryC$ so that the
  codomains of all arrows in $\EpiS$, the domains of all arrows in
  $\MonoS$ are in $\S$, and the following conditions hold:
  \begin{enumerate}
  \item $\EpiS$ and $\MonoS$ are closed under composition with
    isomorphisms in $\S$, on the right, respectively left
    side.
  \item All \kl{$\S$-small arrows} in $\categoryC$ have an
    \intro{$(\EpiS,\MonoS)$-factorization}, that is, if $f\colon X\to
    Y$ factors through an object of $\S$, then there exists $e\in
    \EpiS$ and $m\in \MonoS$, such that $f=m\circ e$.
  \item The \intro{unique $(\EpiS,\MonoS)$-diagonalization property}
    holds: for each commutative diagram
    \begin{equation}
      \label{eq:9}
      \begin{tikzcd}
        X\ar[r,"e"]\ar[d,swap, "f"] & T\ar[d, "g"]\ar[ld, dashed, "u"] \\
        S\ar[r, swap,"m"] & Y
      \end{tikzcd}
    \end{equation}
    with $e\in \EpiS$ and $m\in \MonoS$, there exists a unique
    \intro{diagonal}, that is, a unique morphism $u\colon T\to S$ such
    that $u\circ e=f$ and $m\circ u=g$.
  \end{enumerate}
\end{definition}
Using standard techniques, we can prove that whenever $(\EpiS,\MonoS)$
is a \kl{factorization system through~$\S$} on $\categoryC$, both
classes $\EpiS$ and $\MonoS$ are closed under composition, their
intersection consists of precisely the isomorphisms in $\categoryS$,
and, as expected, that $(\EpiS,\MonoS)$-factorizations of
$\categoryS$-small morphisms are unique up to isomorphism.

\begin{example}
  Instantiating $(\categoryC,\categoryS)$ to be $(\Set,\SetFin)$
  yields a natural \kl{factorization system through~$\categorySetFin$}
  on~$\Set$ (as the restriction of the standard
  (epi,mono)-factorization system on~$\Set$ to~\kl{$\SetFin$-small
    morphisms}, i.e., the maps of finite image).  Over these
  categories~\kl{$\categorySetFin$-automata} are \kl{deterministic
    finite state automata} inside the more general category of
  \kl{$\categorySet$-automata} which are deterministic (potentially
  infinite) automata.  The example~$(\Vec,\VecFin)$ was already
  mentioned. In this case, being~\kl{$\categoryVecFin$-small} is
  equivalent to having finite rank.

  Notice that for $(\categoryC,\categoryS)=(\Set,\SetFin)$ or
  $(\Vec,\VecFin)$, the factorization systems through the
  subcategories are obtained simply by restricting the factorization
  systems on $\Set$, respectively $\Vec$. This is because, in these
  cases $\categoryS$ is closed under quotients and subobjects
  in~$\categoryC$.  The category~$\GlueFin(\VecFin)$ used in this
  paper is closed under quotients, but in general not under subobjects
  (and this is the important reason for this extension of the standard
  notion of factorization). This is also a case in which there is a
  factorization system in the category~$\categoryC$, that coincide
  over~$\categoryS$ with factorizing through~$\categoryS$, but for
  which factorizing in~$\categoryC$ of an~\kl{$\categoryS$-small}
  morphism and factorizing it through~$\categoryS$ yield different
  results.
\end{example}

A factorization system on $\categoryC$ lifts naturally to categories
of $\categoryC$-valued functors.  Automata being very close in
definition to such a functor category, factorization systems also lift
to them.  Lemma~\ref{lem:factorization-though-automata} shows that
this is also the case for \kl{factorization systems
  through~$\categoryS$}, assuming of course that the input and output
objects $I$ and $F$ belong to $\categoryS$.
\begin{lemma}\label{lem:factorization-though-automata}
  Whenever $(\EpiS,\MonoS)$ is a
  \kl[factorization through]{factorization system
    through a category~$\categoryS$} then  $(\EpiAutoS,\MonoAutoS)$ is a
  \kl{factorization system through} $\categorySautomataL$ for the
  category $\categoryCautomataL$, where~$\intro[\EpiAutoS]{}\EpiAutoS$
  (resp. $\intro[\MonoAutoS]{}\MonoAutoS$) contains these
  \kl{$\categoryCautomataL$-morphisms} that belong to~$\EpiS$
  (resp. to $\MonoS$) as $\categoryC$-morphisms.
\end{lemma}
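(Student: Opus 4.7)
The plan is to lift each $\categoryC$-level factorization of the underlying morphism of \kl{state objects} to an automaton-level factorization, using the \kl{unique $(\EpiS,\MonoS)$-diagonalization property} to transport a compatible automaton structure to the intermediate object. Observe first that a \kl{$\categoryCautomataL$-morphism} $h\colon\automatonA\to\automatonA'$ factors through some object of $\categorySautomataL$ precisely when its underlying $\categoryC$-morphism $h\colon\stateObject\to\stateObject'$ is \kl{$\categoryS$-small}; one direction is immediate, and the construction below provides the other.

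For the factorization, let $h=m\circ e$ be an \kl{$(\EpiS,\MonoS)$-factorization} of the underlying morphism through an object $S\in\categoryS$. I would endow $S$ with the initial morphism $i_S:=e\circ i$, the final morphism $f_S:=f'\circ m$, and, for each letter $a\in\alphabetA$, the transition $\delta_a^S$ defined as the unique \kl{diagonal} of
\begin{equation*}
  \begin{tikzcd}
    \stateObject\ar[r,"e"]\ar[d,swap,"e\circ\delta_a"] & S\ar[d,"\delta_a'\circ m"]\ar[ld,dashed,"\delta_a^S"]\\
    S\ar[r,swap,"m"] & \stateObject',
  \end{tikzcd}
\end{equation*}
whose outer square commutes because $\delta_a'\circ m\circ e=\delta_a'\circ h=h\circ\delta_a=m\circ e\circ\delta_a$. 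The identities $\delta_a^S\circ e=e\circ\delta_a$ and $m\circ\delta_a^S=\delta_a'\circ m$ supplied by the diagonal make both $e$ and $m$ into \kl{automata morphisms}, and a routine induction on $w$ yields $f_S\circ\delta_w^S\circ i_S=L(w)$. Hence the new automaton lies in $\categorySautomataL$ and the factorization belongs to $\EpiAutoS\times\MonoAutoS$.

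For the \kl{diagonalization property} in the automata category, consider a commutative square of \kl{automata morphisms} with top-right vertex $T$ and bottom-left vertex $S$, in which $e\in\EpiAutoS$ and $m\in\MonoAutoS$. The underlying $\categoryC$-square has a unique diagonal $u$ by the base property, and to verify that $u$ is an \kl{automata morphism} I would re-apply that same property to the auxiliary square whose vertical sides are obtained by composing with the appropriate transitions: both $u\circ\delta_a^T$ and $\delta_a^S\circ u$ are valid diagonals of this auxiliary square, hence coincide. Compatibility of $u$ with the initial and final morphisms follows by direct computation using that every edge of the original square is an \kl{automata morphism}. Closure of $\EpiAutoS,\MonoAutoS$ under composition with isomorphisms of $\categorySautomataL$ is inherited from the base system. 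The delicate point is simply the bookkeeping of the two squares -- one to \emph{define} $\delta_a^S$, and one to \emph{verify} that the lifted diagonal respects transitions -- after which the uniqueness half of the $\categoryC$-level diagonalization does the real work.
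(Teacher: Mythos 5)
Your proposal is correct and follows essentially the same route as the paper's direct proof: factor the underlying morphism of state objects, equip the intermediate object with an automaton structure by taking the unique \kl{diagonal} for each transition (initial and final maps being forced by composition), and then establish the automata-level \kl{unique diagonalization property} by showing that both candidate composites are diagonals of an auxiliary square, so that uniqueness at the level of $\categoryC$ does the work. The remaining verifications you defer (closure under isomorphisms, acceptance of $L$ by the intermediate automaton via the morphism $e$) are indeed routine.
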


\subsection{Minimization through a subcategory}
\label{subsection:minimization}

In this section, we show how the joint combination of having
\kl[initial automaton]{initial} and \kl{final automata for a
  language}, as given by Lemma~\ref{lem:initial-and-final-automata},
together with a \kl{factorization system through a
  subcategory~$\categoryS$} yields the existence of a minimal
\kl{$\categoryS$-automaton} for \kl{small}
\kl{$\categoryC$-languages}.

\AP We make the \intro[minimal assumptions]{following assumptions}:
$(\EpiS,\MonoS)$ is a \kl{factorization system through~$\categoryS$
  on~$\categoryC$}, and~$L$ is a \kl{$\categoryC$-language}
\kl{accepted} by some \kl{$\categoryS$-automaton} such that there
exist an \kl[initial automaton]{initial $\initialCautomatonL$}
$\categoryC$-automaton and a \kl{final $\categoryC$-automaton}
$\finalCautomatonL$ for~$L$.

\begin{definition}[minimal automaton]
  The \intro{minimal $\categoryC$-automaton for~$L$}, denoted
  $\intro*\minimalSautomatonL$, is the\footnote{It is unique up to
    isomorphism according to the \kl{diagonal property}.}
  \kl{$\categoryS$-automaton for~$L$} obtained by
  \kl{$(\EpiAutoS,\MonoAutoS)$-factorization} of the unique
  $\categorySautomataL$"-small" morphism from~$\initialCautomatonL$
  to~$\finalCautomatonL$.
\end{definition}

\begin{theorem}
  \label{thm:minimization-through}
  For all \kl{$\categoryS$-automata~$\automatonA$ for~$L$} satisfying
  the \kl[minimal assumptions]{above assumptions}, we have
  \begin{align*}
    \minimalSautomatonL \cong
    \obsSautomaton(\reachSautomaton(\automatonA)) \cong
    \reachSautomaton(\obsSautomaton(\automatonA))\,,
  \end{align*}
  in which \AP
  \begin{itemize}
  \item $\intro*\reachSautomaton(\automatonA)$ is the result of
    applying an \kl{$(\EpiAutoS,\MonoAutoS)$-factorization} to the
    unique \kl{$\categorySautomataL$-morphism}
    from~$\initialCautomatonL$ to~$\automatonA$, and
  \item $\intro*\obsSautomaton(\automatonA)$ is the result of applying
    an \kl{$(\EpiAutoS,\MonoAutoS)$-factorization} to the unique
    \kl{$\categorySautomataL$-morphism} from~$\automatonA$
    to~$\finalCautomatonL$.
  \end{itemize}
\end{theorem}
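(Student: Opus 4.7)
The plan is to perform two factorizations and then bridge them with a third. Let $u\colon \initialCautomatonL \to \automatonA$, $v\colon \automatonA \to \finalCautomatonL$ and $u_0\colon \initialCautomatonL \to \finalCautomatonL$ denote the unique morphisms in $\categoryCautomataL$ given by the initial and final properties, so that $v \circ u = u_0$. All three are $\categorySautomataL$-small (the intermediate automaton $\automatonA$ already is), so by Lemma~\ref{lem:factorization-though-automata} I would first apply $(\EpiAutoS,\MonoAutoS)$-factorizations to obtain $u = m_1 \circ e_1$ through $\reachSautomatonA$ and $v = m_2 \circ e_2$ through $\obsSautomatonA$. The definition of factorization through a subcategory guarantees that both intermediate objects $\reachSautomatonA$ and $\obsSautomatonA$ lie in $\categorySautomataL$.

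I would then factor the induced $\categorySautomataL$-small morphism $e_2 \circ m_1\colon \reachSautomatonA \to \obsSautomatonA$ as $m_3 \circ e_3$, with intermediate object $B$ in $\categorySautomataL$. Using closure of $\EpiAutoS$ and $\MonoAutoS$ under composition (the ``standard technique'' alluded to right after the definition of factorization systems), the following identities are all $(\EpiAutoS,\MonoAutoS)$-factorizations with intermediate object $B$:
\[
  u_0 = (m_2 \circ m_3) \circ (e_3 \circ e_1), \qquad
  e_2 \circ u = m_3 \circ (e_3 \circ e_1), \qquad
  v \circ m_1 = (m_2 \circ m_3) \circ e_3.
\]
By initiality of $\initialCautomatonL$ in $\categoryCautomataL$, the middle identity is the canonical morphism whose factorization defines $\reachSautomaton(\obsSautomatonA)$; by finality of $\finalCautomatonL$, the right identity is the canonical morphism defining $\obsSautomaton(\reachSautomatonA)$; and the left identity is, by construction, the morphism defining $\minimalSautomatonL$. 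Uniqueness up to isomorphism of $(\EpiAutoS,\MonoAutoS)$-factorizations then delivers $\minimalSautomatonL \cong \reachSautomaton(\obsSautomatonA) \cong \obsSautomaton(\reachSautomatonA)$.

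The only non-trivial ingredient, and hence the main obstacle, is the background closure-under-composition lemma for $\EpiAutoS$ and $\MonoAutoS$. I expect this to follow from a classical orthogonality argument at the level of $(\EpiS,\MonoS)$ in $\categoryC$: given two composable $\EpiS$-arrows and a square against any $\MonoS$-arrow, one constructs the diagonal fill in two stages, and uniqueness of the diagonal then forces the composite to enjoy the same unique diagonalization property; the dual argument handles $\MonoS$. Lemma~\ref{lem:factorization-though-automata} then transports this closure from $\categoryC$ to $\categoryCautomataL$. Once this is in place, the theorem reduces to the diagram chase above.
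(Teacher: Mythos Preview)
Your proof is correct and follows essentially the same route as the paper: both arguments rest on closure of $\EpiAutoS$ and $\MonoAutoS$ under composition together with uniqueness of $(\EpiAutoS,\MonoAutoS)$-factorizations up to isomorphism. The only cosmetic difference is that the paper factors $\reachSautomatonA\to\finalCautomatonL$ directly to obtain $\obsSautomaton(\reachSautomatonA)$ and then appeals to symmetry for the other order, whereas you introduce a single bridging object~$B$ by factoring $\reachSautomatonA\to\obsSautomatonA$ and show it realizes all three automata at once.
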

This theorem does not only state the existence of a minimal automaton,
it also makes transparent how to make effective its construction: if
one possesses both an implementation of~$\reachSautomaton$ and
$\obsSautomaton$, then their sequencing minimises an input
automaton. From the above theorem it immediately follows that
$\minimalSautomatonL$ is both an \kl{$\EpiAutoS$-quotient} of a
\kl{$\MonoAutoS$-subobject} of $\automatonA$ and a
\kl{$\MonoAutoS$-subobject} of an \kl{$\EpiAutoS$-quotient} of
$\automatonA$: the minimal automaton divides every other automaton for
the language.
\begin{proof}[Proof idea.]
  The proof is contained in the following commutative diagram, in
  which $\twoheadrightarrow$ denotes
  \kl{$\categoryCautomataL$-morphisms} in~$\EpiAutoS$, and
  $\rightarrowtail$ \kl{$\categoryCautomataL$-morphisms}
  in~$\MonoAutoS$: 
\begin{center}
    \begin{tikzcd}
[column sep={1.5cm,between origins},row
      sep={0.9cm,between origins}]
      &&& \automatonA\ar[rrrd,bend left=10] & &
      \\
      \initialCautomatonL\ar[rrru,bend left=10]\ar[rrrd,bend
      right=10,two heads]\ar[rr,two heads] &&
      \reachSautomatonA\ar[ru,rightarrowtail]\ar[rr,two heads] &&
      \obsSautomaton(\reachSautomaton(\automatonA))\ar[rr,
      rightarrowtail] && \finalCautomatonL
      \\
      &&& \minimalSautomatonL \ar[rrru,bend
      right=10,rightarrowtail]\ar[ru,dashed, no head]& &
    \end{tikzcd}
  \end{center}
That
  $\obsSautomaton(\reachSautomaton(\automatonA))$ is
  an \kl{$(\EpiAutoS,\MonoAutoS)$-factorization} of the unique
  \kl{$\categoryCautomataL$-morphism} from~$\initialCautomatonL$
  to~$\finalCautomatonL$ follows since~$\EpiAutoS$ is closed under
  composition. By the \kl{unique diagonal property}, it is isomorphic
  to~$\minimalSautomatonL$. The case
  $\reachSautomaton(\obsSautomaton(\automatonA))$ is symmetric.
\end{proof}

\subsection{A special case of factorization through}
\label{section:fact-thorugh}

So far, the description of factorization and minimization of automata
is very generic. Hereafter, the classes $\EpiS$ and~$\MonoS$ are
constructed along a particular principle which we describe now. In the
next sections we will instantiate this construction when $\categoryS$
is the subcategory~$\GlueFin(\VecFin)$ of \kl{glued vector spaces}.

\AP In this section we fix an \kl{$(\Epi,\Mono)$-factorization system}
on~$\categoryC$ and a
subcategory~$\categoryS\hookrightarrow\categoryC$.

\begin{definition}
  \AP An \intro{$\categoryS$-extremal epimorphism}%
  \footnote{Note that \kl{$\categoryS$-extremal epimorphisms} need not
    be epimorphisms in $\categoryC$.
  }  in $\categoryC$ is
  an arrow $e\colon X\to S$ in $\categoryC$, with $S$ an object in
  $\S$, such that if $e=m\circ g$ where $m$ is in $\Mono$ with domain
  in $\S$, then $m$ is an isomorphism. We set $\MonoS$ to be the class
  of arrows in $\Mono$ with domain in $\categoryS$, and $\EpiS$ to be
  the class of \kl{$\categoryS$-extremal epimorphisms}.
\end{definition}

\begin{definition}
  \AP An \intro{$\MonoS$-subobject in $\S$} of an object $X$ of $\C$
  is an equivalence class up to isomorphism of a morphism $S\to X$
  belonging to $\Mono$, where $S$ is an object of $\S$. The
  $\MonoS$-subobject $S\to X$ is called \intro{proper} if it is not an
  isomorphism.
\end{definition}

\begin{restatable}{lem}{constrFactSystThrough}\label{lem:constr-fact-syst-through}
  Assume the following conditions hold:
  \begin{enumerate}
  \item\label{cond:1} all arrows in $\Mono$ are monomorphisms in
    $\categoryC$,
  \item\label{cond:2} $\categoryS$ is closed under $\Epi$-quotients,
    i.e., if $e\colon S\to T$ is in $\Epi$ with $S$ in $\categoryS$,
    then $T$ is isomorphic to an object of $\categoryS$,
  \item\label{cond:3} the intersection of a nonempty set of
    $\MonoS$-subobjects of an object $X$ of $\categoryC$ exists and is
    an $\Mono_S$-subobject of $X$, and,
  \item\label{cond:4} the pullback of an $\MonoS$-subobject $m\colon
    S\to T$ of $T$ along a morphism $T'\to T$ in $\categoryS$ is an
    $\MonoS$-subobject of $T'$.
  \end{enumerate}
  Then $(\EpiS,\MonoS)$ is a \kl{factorization system through
    $\categoryS$} on $\categoryC$.
\end{restatable}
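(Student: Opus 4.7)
The plan is to verify in turn the three axioms of the definition of a factorization system through~$\categoryS$. Closure under composition with isomorphisms in $\categoryS$ (axiom 1) is essentially formal: in any $(\Epi,\Mono)$-factorization system, $\Mono$ contains every isomorphism and is closed under composition, so pre-composing $m\in\MonoS$ by an iso in $\categoryS$ keeps the domain inside $\categoryS$ and the arrow inside $\Mono$. For $\EpiS$ closed under post-composition with an iso $\phi$ in $\categoryS$: if $\phi\circ e=m\circ g$ with $m\in\Mono$ and domain in $\categoryS$, then $e=(\phi^{-1}\circ m)\circ g$, and the $\categoryS$-extremality of $e$ forces $\phi^{-1}\circ m$ to be an iso, hence $m$ to be an iso.

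Existence of $(\EpiS,\MonoS)$-factorizations for an $\categoryS$-small arrow $f\colon X\to Y$ is the heart of the proof. Let $\mathcal{F}$ denote the family of $\MonoS$-subobjects of $Y$ through which $f$ factors. This family is nonempty: a witness $f=v\circ u$ with $v\colon S_0\to Y$ can be factored as $v=m_0\circ e_0$ using the ambient $(\Epi,\Mono)$-system, and condition~\ref{cond:2} ensures the codomain of $e_0$ remains in $\categoryS$, so $m_0\in\MonoS$ lies in $\mathcal{F}$. By condition~\ref{cond:3}, the intersection of $\mathcal{F}$ exists as an $\MonoS$-subobject $m\colon T\to Y$, and the universal property of this intersection as a limit yields a unique $e\colon X\to T$ with $m\circ e=f$. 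It remains to verify $e\in\EpiS$: any factorization $e=m'\circ g$ with $m'\colon S'\to T$ in $\Mono$ and $S'\in\categoryS$ produces $m\circ m'\in\MonoS$ lying in $\mathcal{F}$, so minimality of the intersection supplies $u\colon T\to S'$ with $m\circ m'\circ u=m$. Condition~\ref{cond:1} applied to~$m$ gives $m'\circ u=\id_T$, and applied to~$m'$ gives $u\circ m'=\id_{S'}$, whence $m'$ is an iso.

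For the unique diagonalization property (axiom 3), given the square~\eqref{eq:9} with $e\in\EpiS$ and $m\in\MonoS$, form the pullback $P$ of $m$ along $g\colon T\to Y$, with projections $\pi_S\colon P\to S$ and $\pi_T\colon P\to T$. Condition~\ref{cond:1} makes $m$ monic, hence $\pi_T$ monic; condition~\ref{cond:4}, applied to the $\MonoS$-subobject $m$ and the arrow $g$ with domain $T\in\categoryS$, upgrades $\pi_T$ to an element of $\MonoS$. The commuting square induces $e'\colon X\to P$ with $\pi_T\circ e'=e$ and $\pi_S\circ e'=f$; applying the $\categoryS$-extremality of $e$ to the factorization $e=\pi_T\circ e'$ forces $\pi_T$ to be an iso, and then $u:=\pi_S\circ\pi_T^{-1}$ is the required diagonal, unique because $m$ is monic. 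The principal delicate step is the $\categoryS$-extremality of the factorizing~$e$ constructed above, which crucially combines the minimality of the intersection with both mono-ness uses of condition~\ref{cond:1} to promote a retract to an iso; the remainder is a standard interplay of pullbacks and universal properties.
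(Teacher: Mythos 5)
Your proof is correct and follows essentially the same route the paper intends: nonemptiness of the family of $\MonoS$-subobjects through which an $\categoryS$-small arrow factors via condition~2, the $(\EpiS,\MonoS)$-factorization via the intersection of that family (condition~3) with extremality of the induced arrow obtained from minimality of the intersection plus mono-cancellation from condition~1, and the diagonal via the pullback of $m$ along $g$ (condition~4) combined with $\categoryS$-extremality of $e$. The only point worth making explicit is that you use the intersection's universal property as a wide pullback (not merely as an infimum of subobjects) to obtain the mediating arrow into the intersection, which is indeed the intended reading of condition~3.
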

The next lemma ensures that condition~\ref{cond:3} 
of Lemma~\ref{lem:constr-fact-syst-through} can be replaced with the
weaker version involving only binary \kl{intersections} of
\kl{$\MonoS$-subobjects}, provided that
any infinite descending chain of \kl{$\MonoS$-subobjects} eventually
stabilises (of course, up to isomorphism).

\begin{restatable}{lem}{wellFoundedIntersec}
  \label{lem:well-founded-and-interesec}
  Assume that there are no infinite descending chains of \kl{proper}
  \kl{$\MonoS$-subobjects}
  \[
  \begin{tikzcd}[ampersand replacement=\&]
    X \& S_1\ar[l,rightarrowtail] \& S_2\ar[l,rightarrowtail] \&
    \ldots \ar[l,rightarrowtail]
  \end{tikzcd}
  \]
  and furthermore that the intersection of any two
  \kl{$\MonoS$-subobjects} of an object $X$ of $\categoryC$ exists and
  is an \kl{$\MonoS$-subobject} of $X$, then condition~\ref{cond:3} in
  the hypothesis of Lemma~\ref{lem:constr-fact-syst-through} holds.
\end{restatable}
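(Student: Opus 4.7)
The plan is to construct the intersection of an arbitrary nonempty family of $\MonoS$-subobjects $\{m_i\colon S_i\rightarrowtail X\}_{i\in I}$ by first restricting attention to the family $\mathcal{P}$ of all \emph{finite} intersections $S_{i_1}\cap\cdots\cap S_{i_n}$ built iteratively from the given family. Iterating the binary-intersection hypothesis and using closure under isomorphism shows that each element of $\mathcal{P}$ is again an $\MonoS$-subobject of $X$. The remaining task is to locate a least element of $\mathcal{P}$ in the subobject order on $X$ and to check that it is the desired infimum of the whole family.

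To find a least element, I would pick some $T_0\in\mathcal{P}$ and iterate: as long as there exists $T_{k+1}\in\mathcal{P}$ with $T_{k+1}\lneq T_k$ as $\MonoS$-subobjects of $X$, choose one. The resulting monomorphisms $\cdots\rightarrowtail T_1\rightarrowtail T_0\rightarrowtail X$ assemble into a descending chain of $\MonoS$-subobjects of $X$, each step being a proper mono by strictness of the inclusion. The descending-chain assumption forces the process to terminate, producing some $T_\ast\in\mathcal{P}$ which has no strictly smaller element in~$\mathcal{P}$.

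The last two steps are then short. For any $i\in I$, the binary intersection $T_\ast\cap S_i$ lies in $\mathcal{P}$ and is below $T_\ast$; minimality of $T_\ast$ in $\mathcal{P}$ forces $T_\ast\cap S_i\cong T_\ast$, so the mono $T_\ast\rightarrowtail X$ factors through $m_i$ and thus $T_\ast\leq S_i$. Hence $T_\ast$ is a lower bound of the whole family. Conversely, if $T'$ is any $\MonoS$-subobject of $X$ with $T'\leq S_i$ for every $i\in I$, then $T'$ is below each finite binary intersection of the $S_i$ (being below each factor), hence $T'\leq T_\ast$; so $T_\ast$ is the greatest lower bound, i.e.\ the required intersection, and it is an $\MonoS$-subobject by construction.

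The only subtlety, which is the main thing to get right, is the correct application of the descending-chain condition: one must check that each strict step $T_{k+1}\lneq T_k$ in the subobject poset really does yield a non-iso monomorphism $T_{k+1}\rightarrowtail T_k$, so that the sequence is a legitimate chain of proper $\MonoS$-subobjects of $X$ in the sense of the hypothesis. This is immediate from the definition of the subobject order, but it is what makes the argument go through without resorting to any well-ordering machinery on the (possibly large) index set~$I$.
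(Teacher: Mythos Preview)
Your proof is correct and follows essentially the same approach as the paper: build a strictly descending chain out of finite partial intersections of the family, use the no-infinite-descending-chain hypothesis to make it terminate, and verify that the last term is the intersection of the whole family. The paper's sketch phrases this as successively enlarging a finite partial intersection, while you phrase it as finding a minimal element in the poset $\mathcal{P}$ of all finite intersections, but these are the same argument.
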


The proof simply uses finite partial intersections in order to create
a strictly descending chain of \kl{$\MonoS$-subobjects}. By
assumption, this construction has to stop, and the last element of the
sequence happens to be the intersection of the entire
family.\footnote{The attentive reader will have recognised in this
  argument part of the reason why every subset of a finite-dimensional
  vector space admits a minimal cover as a finite union of subspaces.}

\section{Gluing of categories}
\label{section:gluing of categories}
We turn now to the central construction of this paper: given a
category~$\categoryC$ and a subcategory~$\categoryS$, we construct a
category~$\Glue(\categoryC)$ of ``gluings of objects in~$\categoryC$''
that has both~$\categoryC$ and~$\GlueFin(\categoryS)$ -- the category
of ``finite gluings of objects in~$\categoryS$'' -- as subcategories.
Under proper assumptions on~$\categoryC$ and~$\categoryS$, the
resulting pair $(\GlueC,\GlueFinS)$ satisfies the assumption required
for constructing minimisable automata for
\kl{$\GlueC$-languages}. Taking~$\categoryC=\Vec$
and~$\categoryS=\VecFin$ we obtain the construction informally
described in Section~\ref{section:informal presentation}.

\AP Throughout this section we assume that $\categoryC$ is equipped
with a $(\Epi,\Mono)$-factorization system consisting of strong
epimorphisms and monomorphisms.

\subsection{The free gluing of a category}

When $\categoryC$ is small, it is well known that the Yoneda embedding
of $\categoryC$ into the category of presheaves over $\categoryC$ is a
free completion of $\categoryC$ under colimits of small diagrams. For
a possibly large category, one has to consider instead the category of
small presheaves, i.e. small colimits of representable ones, see for
example~\cite{DayLack2007}. For our purposes, we found more
illuminating and direct to use a syntactic way of describing the
colimit completion of a category.

\subparagraph{The category of diagrams.}  \AP Assume $\categoryC$ is a
locally small category. The \intro{free colimit completion of $\C$} is
the category $\intro*\DiagC$ whose objects are \intro{diagrams}
$\functorF\colon\D\to\C$ and morphisms between two diagrams
$\functorF\colon\D\to\C$ and $\functorG\colon\E\to\C$ will be given in
Definition~\ref{def:morph-of-diag}.

\AP To this end we define an equivalence relation on arrows from an
arbitrary object~$X$
\begin{wrapfigure}{r}{2.3cm}
  \vspace{-0.29cm}$\begin{tikzcd}[column sep={1.5cm,between
      origins},row sep={0.6cm,between origins}]
    & \functorG e\ar[dd,"\functorG j"] \\
    X\ar[ru,"g"]\ar[rd,swap,"g'"] &\\
    & \functorG e'
  \end{tikzcd}$
\end{wrapfigure}
of $\C$ to the objects in the image of $\functorG$.
Assume $e,e'$ are objects in $\E$. \AP We consider the least
equivalence relation $\intro*\equivG$ which contains all pairs
$(g,g')$, where $g\colon X\to \functorG e$, $g'\colon X\to \functorG
e'$ are such that there exists $j\colon e\to e'$ a map in $\E$ with
$\functorG j\circ g=g'$, i.e., the diagram on the right commutes.

We denote by $\intro[\Ghat]{}\Ghat(X)$ the equivalence classes of the
relation~$\equivG$.

\begin{definition}
  \label{def:morph-of-diag}
  A \intro[diagram morphism]{morphism between diagrams}
  $\functorF\colon\D\to\C$ and $\functorG\colon\E\to\C$ is a map $f$
  which associates to each object $d$ in $\D$ an equivalence class
  $f(d)\in\Ghat(\functorF d)$, such that whenever $u\colon d\to d'$ is
  a morphism in $\D$ and $g\colon \functorF d'\to \functorG e$ is in
  the equivalence class $f(d')$, then $g\circ \functorF u$ is in the
  equivalence class $f(d)$.
\end{definition}

\subparagraph{The subcategory of gluings.}  \AP We are now ready to
define the category~$\GlueC$, which is a restriction of~$\DiagC$ to
"$\Mono$-diagrams", that is, to diagrams that intuitively `do not
quotient'.  Recall that $\categoryC$ has a factorization system
$(\Epi,\Mono)$ in which~$\Epi$ are the strong epimorphisms and~$\Mono$
are the monomorphisms.
\begin{definition}[\reintro{glued category}]\AP
  An \intro{\Mono-cocone} over a \kl{diagram} $\functorF\colon\D\to\C$
  is a cocone $(u_d\colon\functorF d\to X)_{d\in\categoryD}$ such that
  all the \kl{structural components} of the cocone $u_d$ are in
  $\Mono$.  An \intro{$\Mono$-diagram} is a diagram that has an
  \kl{\Mono-cocone}.

  \AP The \reintro{glued category} $\intro*\Glue(\categoryC)$ is the
  subcategory of $\DiagC$ over the "$\Mono$-diagrams"
  $\functorF\colon\D\to\C$.
  Let~$\intro*\GlueFin(\categoryC)$ the subcategory
  of~$\Glue(\categoryC)$ that has as objects the finite diagrams
  of~$\Glue(\categoryC)$.
\end{definition}
Notice  that, if $\functorF$ is such a diagram, then we can show that for
each morphism $v\colon d \to d'$ in $\D$, we have that $\functorF
u\colon \functorF d\to \functorF d'$ is in $\Mono$ (however this is
not a characterisation).  Also, if there exists a universal
cocone for an \kl{\Mono-diagram}, then this cocone is in particular an
\kl{\Mono-cocone}.
\begin{restatable}{lem}{monoDiagReflect}
  \label{lem:monoDiag-reflect}
  If~$\categoryC$ is cocomplete, then $\MonoDiagC$ is a full
  reflective subcategory of $\DiagC$, and hence $\MonoDiagC$ is a
  cocomplete category. If $\categoryC$ is furthermore complete, then
  so is $\MonoDiagC$.
\end{restatable}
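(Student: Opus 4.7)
The plan is to construct a left adjoint (reflector) $r \colon \DiagC \to \MonoDiagC$ to the full inclusion; once this is established, both (co)completeness consequences follow formally. Cocompleteness of $\MonoDiagC$ transfers from $\DiagC$ via the reflector (apply $r$ to the colimit computed in $\DiagC$); completeness, under the extra hypothesis, is inherited directly since the inclusion, as a right adjoint, preserves limits, provided one checks that a $\DiagC$-limit of a diagram of $\Mono$-diagrams is again a $\Mono$-diagram. We tacitly use that $\DiagC$ itself is (co)complete whenever $\C$ is --- cocompleteness being the defining universal property of a free colimit completion, and completeness coming from the pointwise, presheaf-style description of the hom-sets of $\DiagC$.

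\emph{Construction of $r$.} For a diagram $\functorF \colon \D \to \C$, set $L = \colim \functorF$ with colimit cocone $\iota_d \colon \functorF(d) \to L$ (available since $\C$ is cocomplete). Apply the $(\Epi, \Mono)$-factorization of $\C$ to each leg: $\iota_d = m_d \circ e_d$ with $e_d \in \Epi$, $m_d \in \Mono$ and $G(d)$ the common intermediate object. For any $u \colon d \to d'$ in $\D$, the equality $m_{d'} \circ e_{d'} \circ \functorF(u) = \iota_d = m_d \circ e_d$ is a commutative square whose unique diagonal fill-in defines $G(u) \colon G(d) \to G(d')$; uniqueness of the fills yields functoriality of $G$. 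The equation $m_{d'} \circ G(u) = m_d$ makes $(m_d)_d$ an $\Mono$-cocone on $G$ with apex $L$, so $r(\functorF) := G$ is a $\Mono$-diagram; the equation $G(u) \circ e_d = e_{d'} \circ \functorF(u)$ exhibits $(e_d)_d$ as a morphism $\eta_{\functorF} \colon \functorF \to r(\functorF)$ of $\DiagC$.

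\emph{Universal property.} Let $\functorH \colon \E \to \C$ be a $\Mono$-diagram with $\Mono$-cocone $(v_\alpha \colon \functorH(\alpha) \to Y)_{\alpha \in \E}$, and $\phi \colon \functorF \to \functorH$ a morphism in $\DiagC$. Pick, for each $d$, a representative $g_d \colon \functorF(d) \to \functorH(\alpha_d)$ of $\phi(d)$. The composite $v_{\alpha_d} \circ g_d \colon \functorF(d) \to Y$ is independent of the representative (the $v_\alpha$ being monic, combined with the cocone identity $v_{\alpha'} \circ \functorH(j) = v_\alpha$), and the naturality condition on $\phi$ makes the family $(v_{\alpha_d} \circ g_d)_d$ a cocone on $\functorF$, factoring uniquely through $L$ as $\bar\phi \colon L \to Y$. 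By the $(\Epi, \Mono)$-diagonal fill-in applied to $e_d$ (top), $g_d$ (left), $v_{\alpha_d}$ (bottom) and $\bar\phi \circ m_d$ (right), we get a unique $\psi_d \colon G(d) \to \functorH(\alpha_d)$. Setting $\psi(d) := [\psi_d]$ yields the required $\psi \colon r(\functorF) \to \functorH$ with $\psi \circ \eta_{\functorF} = \phi$. Naturality in $d$ and uniqueness follow once more from uniqueness of the fills and the fact that each $e_d$ is epic.

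\emph{Main obstacle.} The technical heart is the verification of the universal property. One must show that $\psi_d$ does not depend on the chosen representative of $\phi(d)$, that the classes $(\psi(d))_d$ satisfy the naturality condition defining a $\DiagC$-morphism, and that any $\psi'$ with $\psi' \circ \eta_{\functorF} = \phi$ coincides with $\psi$. All three points rest on systematically exploiting the diagonal fill-in property of $(\Epi, \Mono)$ together with the monicity of the $\Mono$-cocone on $\functorH$ --- the latter being what turns equality modulo the equivalence defining $\widehat{\functorH}$ into honest equality in $\C$ after postcomposition with the relevant $v_\alpha$, and is the real work of the argument.
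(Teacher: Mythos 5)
Your construction of the reflector is the natural one (factor each colimit-injection $\iota_d=m_d\circ e_d$ and let the images with the induced diagonal fill-ins form the reflected $\Mono$-diagram), and the existence half of the universal property goes through as you sketch. The genuine gap is in the step you yourself call ``the real work'': you rely on the principle that monicity of the $\Mono$-cocone $(v_\alpha\colon \functorH(\alpha)\to Y)$ turns equality \emph{modulo} the equivalence defining $\widehat{\functorH}$ into honest equality after postcomposition with the $v_\alpha$ \emph{and back}. Only one direction holds: postcomposition with the cocone legs is constant on equivalence classes (this uses just the cocone identities), but $v_\alpha\circ h=v_{\alpha'}\circ h'$ does \emph{not} imply $h\equivf{\functorH}h'$. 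Counterexample: let $\E$ be discrete with two objects, let $\functorH$ send both to the same object $S$, and take the $\Mono$-cocone whose two legs are one and the same mono $S\to Y$; then the two copies of any arrow $X\to S$ are inequivalent in $\widehat{\functorH}(X)$ although their composites into $Y$ coincide. This breaks precisely your uniqueness argument: from $\psi'\circ\eta_{\functorF}=\phi=\psi\circ\eta_{\functorF}$ you only get, for a representative $h$ of $\psi'(d)$, that $h\circ e_d\equivf{\functorH}\psi_d\circ e_d$; postcomposing with the cocone and cancelling the epi $e_d$ gives an equality in $\C$, which by the above does not yield $h\equivf{\functorH}\psi_d$. (In $\Set$ or $\Vec$ the issue is invisible because strong epis split, but the lemma is stated for an arbitrary cocomplete $\C$.)

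What you actually need, and what is true, is that precomposition with the strong epi $e_d$ induces an \emph{injection} $\widehat{\functorH}(G(d))\to\widehat{\functorH}(\functorF(d))$ whenever $\functorH$ is an $\Mono$-diagram. The correct argument does not go through the cocone but through the zig-zag defining $\equivf{\functorH}$: since all structure maps $\functorH(j)$ of an $\Mono$-diagram are monos, a zig-zag witnessing $h\circ e_d\equivf{\functorH}h'\circ e_d$ can be lifted along $e_d$ step by step, using at each backward step the unique diagonal of the strong epi $e_d$ against the mono $\functorH(j)$; the lifted arrows are connected by the same one-step relations, whence $h\equivf{\functorH}h'$. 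The same local mechanism (uniqueness of diagonals against the monic cocone legs, applied along a zig-zag of representatives) does substantiate your well-definedness and naturality checks, so the construction itself stands, but the uniqueness clause must be reproved along these lines rather than via the false converse. A smaller point: you also assert that $\DiagC$ is complete whenever $\C$ is, ``by the pointwise presheaf description''; for a large $\C$ limits of small presheaves are not computed pointwise, and this needs either the smallness of $\C$ or a reference to the Day--Lack completeness result for small presheaves that the paper cites.
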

In the automata theoretic application we have in mind, we use this
category in order to construct the \kl{initial and final automata for
  a language}.

\subsection{A factorization system through finite gluings}
\label{subsec:fact-thourgh-inf-glue}

The category of most interest for us is the full subcategory
$\MonoDiagFinS$ of $\MonoDiagC$ which consists of the finite
\kl{\Mono-diagrams} over~$\categoryS$.  In this section we construct
in particular, under suitable assumptions, a \kl{factorization system
  through} $\GlueFinS$ on $\MonoDiagC$, making use of
Lemma~\ref{lem:constr-fact-syst-through}.  For~$\categoryS=\VecFin$,
this is the category of `finite gluings of finite vector spaces' that
we longly introduced in Section~\ref{section:informal presentation}.

\AP We define the following classes of morphisms in $\MonoDiagC$.
\begin{itemize}
\item \intro[\EpiGlueC]{}$\EpiMDC$ consists of the \kl[diag
  morphism]{morphisms} $f\colon\functorF\to\functorG$,
  where~$\functorF\colon\categoryD\to\categoryC$ and
  $\functorF\colon\categoryE\to\categoryC$, such that for all $e$ in
  $\categoryE$ there exists a representative $f_d\colon\functorF
  d\to\functorG e'$ in the equivalence class $f(d)$ and a morphism
  $u\colon \functorG e\to \functorG e'$, so that $u\equivf\functorG
  \id_{\functorG e}$ and $u$ factors through the image of $f_d$.
\item \intro[\MonoGlueC]{}$\MonoMDC$ consists of \kl[diag
  morphism]{morphisms} $f\colon\functorF\to\functorG$ such that for
  all morphisms $u\colon X\to \functorF d$ and $v\colon X\to\functorF
  d'$ such that $f_d\circ u\equivG f_{d'}\circ v$ (for $f_d$ and
  $f_{d'}$ in the equivalence classes $f(d)$, respectively $f(d')$),
  we have that $u\equivF v$.
\end{itemize}
One can easily verify that the arrows in $\MonoMDC$ are exactly the
monomorphisms in $\MonoDiagC$.\footnote{As a side remark, we should
  mention that these classes of arrows correspond precisely to the
  natural transformations between the induced presheaves that are
  pointwise injective.}  The next lemma establishes that under mild
conditions on $\categoryC$ we have a "(strong epi, mono) factorization
system" on
$\MonoDiagC$.

\begin{restatable}{lem}{factSystMonoDiagC}
  \label{lem:fact-syst-mono-diag-C}
  Assume $\categoryC$ has intersections. Then $(\EpiMDC,\MonoMDC)$ is
  a (strong epi, mono) \kl{factorization system on $\MonoDiagC$}.
\end{restatable}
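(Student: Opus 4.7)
The plan is to exhibit, for every morphism $f\colon \functorF\to\functorG$ in $\MonoDiagC$, a factorization $f=m\circ e$ with $e\in\EpiMDC$ and $m\in\MonoMDC$, then to establish the unique diagonal fill property, and finally to identify $(\EpiMDC,\MonoMDC)$ with the strong epi/mono pair on $\MonoDiagC$. Writing $\functorF\colon\categoryD\to\categoryC$ and $\functorG\colon\categoryE\to\categoryC$, I would first pick, for each object $d$ of $\categoryD$, a representative $f_d\colon\functorF d\to\functorG e_d$ of the equivalence class $f(d)$, and take its pointwise $(\Epi,\Mono)$-factorization in $\categoryC$ as $\functorF d\twoheadrightarrow \functorH d\rightarrowtail \functorG e_d$, with strong epi $p_d$ and monomorphism $m_d$. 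The $\functorH d$ will be the values of an intermediate diagram on $\categoryD$: given $u\colon d\to d'$ in $\categoryD$, the defining condition of a diagram morphism gives $f_{d'}\circ\functorF u\equivG f_d$, and for each single-step witness $j\colon e_d\to e_{d'}$ in $\categoryE$ the unique $(\Epi,\Mono)$-diagonal fill in $\categoryC$ applied to the square with $p_d$ on the left and $m_{d'}$ on the right yields a canonical morphism $\functorH u\colon\functorH d\to\functorH d'$. Zigzags are handled by composing such fills, and uniqueness makes $\functorH$ into a functor.

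The decisive step is showing that $\functorH$ is a $\Mono$-diagram, and this is where the intersection hypothesis on $\categoryC$ is used. Since $\functorG$ is a $\Mono$-diagram, it admits an $\Mono$-cocone $(v_e\colon\functorG e\to X)_{e\in\categoryE}$, and composing with the $m_d$ gives a family of monos $v_{e_d}\circ m_d\colon\functorH d\to X$ that is not yet a cocone because different $d$'s land in different $v_{e_d}$'s. The intersection in $\categoryC$ of this family of subobjects of $X$ provides a canonical vertex, and one checks, again using the diagonal fill in $\categoryC$, that the $\functorH u$ lift coherently to this intersection and yield the required $\Mono$-cocone. Once $\functorH$ is built, the families $(p_d)_d$ and $(m_d)_d$ assemble into morphisms $e\colon\functorF\to\functorH$ and $m\colon\functorH\to\functorG$ of $\MonoDiagC$; membership $m\in\MonoMDC$ is immediate from monicity of each $m_d$, and $e\in\EpiMDC$ follows by taking $\id_{\functorH d}$ as the witness $u$ in the definition.

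For the unique diagonal property, given a square with $e'\in\EpiMDC$ on the left and $m'\in\MonoMDC$ on the right, I would fill the diagonal pointwise: the $\EpiMDC$-condition supplies, for each object of the codomain of $e'$, a representative whose image factors through the relevant $p_d$, and the pointwise injectivity of $m'$ then makes the corresponding $(\Epi,\Mono)$-fill in $\categoryC$ unique. Coherence of the pointwise fills into a morphism of diagrams is automatic from their uniqueness. Finally, since the remark immediately following the definition already identifies the monomorphisms of $\MonoDiagC$ with $\MonoMDC$, a standard formal argument applies once the diagonal fill is in place: any $e\in\EpiMDC$ that factors as $m''\circ g$ with $m''$ a monomorphism admits a unique diagonal fill against the identity square, forcing $m''$ to be an isomorphism, so $\EpiMDC$ coincides with the strong epimorphisms of $\MonoDiagC$.

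The main obstacle is Step~2: the images $\functorH d$ depend on arbitrary choices of representative $f_d$ and of codomain $e_d\in\categoryE$, and the equivalence $\equivG$ guarantees compatibility only up to zigzags in $\categoryE$. Turning the pointwise images into a bona fide $\Mono$-diagram therefore requires using intersections in $\categoryC$ to neutralise these choices in a canonical way, and carefully checking that the resulting cocone, and hence the factorization itself, is independent of all the choices made along the way.
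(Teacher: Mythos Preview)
Your construction has a genuine gap: indexing the intermediate diagram $\functorH$ by $\categoryD$ and setting $\functorH d$ to be the pointwise image of a chosen representative $f_d$ does \emph{not} make $m\colon\functorH\to\functorG$ a member of $\MonoMDC$. Take $\categoryD$ discrete on two objects $d_1,d_2$ with $\functorF d_1=\functorF d_2=A$, take $\categoryE$ the terminal category with $\functorG e=A$, and let $f_{d_1}=f_{d_2}=\id_A$. Then each $p_{d_i}=\id_A$ and each $m_{d_i}=\id_A$, so your $\functorH$ is $\functorF$ again, $e$ is the identity, and $m$ is the original $f$ --- which is the fold map from two copies of $A$ onto one, an epimorphism that is \emph{not} a monomorphism in $\MonoDiagC$. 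Concretely, $\id_A\colon A\to\functorH d_1$ and $\id_A\colon A\to\functorH d_2$ become $\equivG$-equal after composing with $m$, but are not $\equivH$-related since $\categoryD$ has no arrows between $d_1$ and $d_2$. The problem is structural: two objects of $\categoryD$ whose images in $\functorG$ overlap (or coincide) must be glued in $\functorH$, and keeping $\categoryD$ as the shape cannot do this.

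Relatedly, your use of intersections in Step~2 is misplaced. The family $(v_{e_d}\circ m_d\colon\functorH d\to X)_d$ \emph{is} already a $\Mono$-cocone on your $\functorH$: for $u\colon d\to d'$ in $\categoryD$ the diagonal fill gives $m_{d'}\circ\functorH u=\functorG j\circ m_d$, whence $v_{e_{d'}}\circ m_{d'}\circ\functorH u=v_{e_d}\circ m_d$ because $(v_e)$ is a cocone over $\functorG$. No intersections are needed there, and in any case the intersection of the subobjects $\functorH d\rightarrowtail X$ would produce an object mapping \emph{into} each $\functorH d$, not a vertex receiving maps \emph{from} them. The intersection hypothesis on $\categoryC$ is needed elsewhere: to build the correct intermediate object one must enlarge the shape so as to record how the various images sit inside the $\functorG e$'s, and intersections in $\categoryC$ are precisely what let you add, for any two images landing in the same $\functorG e$, the subobject witnessing their overlap (equivalently, they let you form the smallest $\MonoMDC$-subobject of $\functorG$ through which $f$ factors). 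Your closing paragraph senses that the choices of $e_d$ are problematic, but frames it as a well-definedness issue; the real issue is that the shape $\categoryD$ itself is wrong for the image.
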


\AP In what follows we say that a subcategory $\categoryS$ of
$\categoryC$ is ""well-behaved"" if it satisfies the hypothesis of
Lemmas~\ref{lem:constr-fact-syst-through}
and~\ref{lem:well-founded-and-interesec} with respect to the "(strong
epi, mono) factorization system" on $\categoryC$. (In fact
condition~\ref{cond:3} 
of Lemma~\ref{lem:constr-fact-syst-through} can be replaced by its
binary version.)

\begin{restatable}{thrm}{factSystThroughMonoDiagFinS}
  \label{lem:fact-syst-through-mono-diag-fin-S}
  Assume $\categoryC$ has intersections and pullbacks. If the
  subcategory $\categoryS$ of $\categoryC$ is "well-behaved", then
  $\GlueFinS$ is a "well-behaved" subcategory of $\MonoDiagC$.
\end{restatable}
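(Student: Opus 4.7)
The plan is to verify the four conditions characterising a \kl{well-behaved} subcategory for $\GlueFinS$ inside $\MonoDiagC$, using the (strong epi, mono) \kl{factorization system on $\MonoDiagC$} $(\EpiMDC,\MonoMDC)$ supplied by Lemma~\ref{lem:fact-syst-mono-diag-C}. In each case, the corresponding property of $\categoryS$ inside $\categoryC$ is transported through the indexing categories of the diagrams.

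Conditions~\ref{cond:1} and~\ref{cond:2} are the lighter ones. Condition~\ref{cond:1} is immediate since, by the remark following the definition of $\MonoMDC$, this class coincides with the monomorphisms of $\MonoDiagC$. For Condition~\ref{cond:2}, consider an arrow $e\colon\functorF\to\functorG$ in $\EpiMDC$ with $\functorF\colon\categoryD\to\categoryS$ a finite \kl{\Mono-diagram}. The characterisation of $\EpiMDC$ yields, for every object in the indexing of $\functorG$, a representative that factors through the image of some $\functorF d$. Closure of $\categoryS$ under $\Epi$-quotients forces the resulting image into $\categoryS$, and restricting $\functorG$ to the essential image reached by~$e$ provides a finite \kl{\Mono-diagram} over $\categoryS$ isomorphic to $\functorG$.

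Condition~\ref{cond:4} is established by a pointwise pullback. Given an $\MonoMDC$-subobject $m\colon\functorF\rightarrowtail\functorH$ with $\functorF\in\GlueFinS$, and a morphism $g\colon\functorG\to\functorH$ with $\functorG\in\GlueFinS$, for each pair $(d,e)$ of objects of the respective indexings I would select representatives of $m(d)$ and $g(e)$ landing in a common $\functorH h$ and take their pullback in $\categoryC$. Condition~\ref{cond:4} for $\categoryS$ keeps these pullbacks in $\categoryS$; their collection assembles into a finite \kl{\Mono-diagram} over $\categoryS$, which one checks to be the pullback in $\MonoDiagC$. The binary form of Condition~\ref{cond:3}, as formulated in Lemma~\ref{lem:well-founded-and-interesec}, follows by specialising to $g$ also in $\MonoMDC$.

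The delicate point, and the main obstacle, is the well-foundedness hypothesis of Lemma~\ref{lem:well-founded-and-interesec}: every infinite descending chain of proper $\MonoMDC$-subobjects of a fixed $\functorH\in\MonoDiagC$ with domains in $\GlueFinS$ must stabilise. The indexing shapes of the chain may vary along the descent, so one cannot directly extract a single infinite chain of $\MonoS$-subobjects in $\categoryS$. The argument I would pursue encodes each subobject of $\functorH$ in $\GlueFinS$ by the finite family of $\MonoS$-subobjects of the objects $\functorH h$ that it covers, identified under the equivalence $\equivf{\functorH}$. A properly strict descent in $\MonoDiagC$ must, at some stage and for some covered object, induce a properly strict descent in $\categoryS$, which is ruled out by well-behavedness of $\categoryS$. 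A pigeonhole argument exploiting the finiteness of each indexing would then promote this local stabilisation to global stabilisation.
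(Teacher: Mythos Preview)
Your overall structure matches the paper's: both reduce the theorem to verifying the hypotheses of Lemma~\ref{lem:constr-fact-syst-through} (with Condition~\ref{cond:3} in binary form) together with the descending-chain hypothesis of Lemma~\ref{lem:well-founded-and-interesec}. Your treatment of Conditions~\ref{cond:1}, \ref{cond:2}, \ref{cond:4} and the binary version of~\ref{cond:3} is sketchy but points in the right direction, and the paper does not dwell on these either.

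The substantive divergence is in the well-foundedness argument, and here your proposal has a genuine gap. You claim that a strict step $\functorF_{i+1}\rightarrowtail\functorF_i$ in $\MonoDiagC$ must ``at some stage and for some covered object, induce a properly strict descent in~$\categoryS$'', and then invoke pigeonhole over the finitely many objects of (presumably) $\functorF_0$. Two things are missing. First, a strict mono between finite \kl{\Mono-diagrams} need not shrink any individual component: the strictness may come entirely from the combinatorics of the indexing (fewer components, or components covered differently), so there is no guarantee that any $\MonoS$-subobject of a fixed $\functorF_0 e$ properly decreases. Second, even if at every step \emph{some} component properly decreases, the components hit at successive steps need not be compatible: the $\MonoS$-subobjects of a fixed $\functorF_0 e$ that appear along the chain do not automatically form a single descending chain in~$\categoryS$, because the indexing shapes of the $\functorF_i$ vary and different objects of $\functorF_i$ may land in the same $\functorF_0 e$. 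Pigeonhole on the objects of $\functorF_0$ therefore does not yield an infinite \emph{chain} of proper $\MonoS$-subobjects; it only yields infinitely many proper inclusions with a common ambient object, which is not a contradiction.

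The paper's argument addresses exactly this difficulty. It first \emph{aggregates} all the $\functorF_i$ into a single large \kl{\Mono-diagram}~$\functorF$ whose objects are stratified by \kl[rank main]{rank}~$i$ according to their origin in~$\categoryD_i$. It then establishes a \kl[Global homogeneity]{global homogeneity} property: for arrows into objects of the same rank, the equivalence $\equivf{\functorF}$ computed in the aggregate coincides with $\equivf{\functorF_i}$. Finally, K\"onig's lemma is applied to extract, from this finitely-branching rank-stratified structure, a single infinite branch threading coherently through the successive indexings; this branch is what produces a genuine infinite descending chain of $\MonoS$-subobjects in~$\categoryS$, contradicting well-behavedness of~$\categoryS$. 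The passage through K\"onig's lemma is precisely the device that turns ``infinitely many local decreases, possibly incomparable'' into ``one infinite chain'', and it is this step that your pigeonhole sketch does not supply.
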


\begin{proof}[Some ideas about the proof.]
  This result is an application of
  Lemmas~\ref{lem:constr-fact-syst-through}
  and~\ref{lem:well-founded-and-interesec}.  The central combinatorial
  aspect of this statement is that there exists no infinite \kl[proper
  subobject]{strictly} descending chains of \kl{$\MonoMDC$-subobjects}
  in $\GlueFinS$.  For the sake of contradiction, let us consider a
  descending sequence of \kl{diagrams} from~$\GlueFinS$:
  \[
  \begin{tikzcd}
    \functor{F_0} &\functor{F_1}\ar[l,rightarrowtail,"f_1"'] &
    \functor{F_2}\ar[l,rightarrowtail,"f_2"'] & \ldots
    \ar[l,rightarrowtail,"f_3"']
  \end{tikzcd}
  \]
  We have to prove that it is ultimately constant (up to \kl[diagram
  morphism]{isomorphism}). Let the \kl{diagrams} be
  $\functorF_i:\categoryD_i\rightarrow\categoryS$ for all~$i$. 
  The first step is to consider an \emph{aggregation of
    mono-diagrams}, that is, we construct a big "diagram" that
  aggregates all the~$\functorF_i$'s.  At the level of objects this
  diagram contains the disjoint unions of the~$\categoryD_i$'s. We
  call the objects originating from~$\categoryD_i$ of ""rank~$i$@rank
  main"". Secondly, we prove a \emph{global homogeneity} property of
  $\functorF$: given two arrows~$g\colon X\rightarrow \functorF d$,
  $g'\colon X\rightarrow\functorF d'$ with~$d,d'$ at the same
  "rank~$i$@rank main", then $g\equivf{\functorF}g'$ if and only if
  $g\equivf{\functorF_i}g'$.
  Finally we prove the existence of an isomorphism $g_j$
  from~$\functorF_{j-1}$ to~$\functorF_j$ for some $j$ by analysing
  the structure of $\functorF$ and using König's lemma.
\end{proof}

We come back to the leading example of $\GlueFin(\VecFin)$-"automata".
Applying Theorem~\ref{lem:fact-syst-through-mono-diag-fin-S} for
$(\categoryC,\categoryS)=(\Vec,\VecFin)$ we obtain a "factorization
system through" $\GlueFin(\VecFin)$ on $\Glue(\Vec)$. Using
Lemma~\ref{lem:monoDiag-reflect} and
Theorem~\ref{thm:minimization-through} we derive that "hybrid
set-vector automata" are minimisable.
\begin{corollary}
  \label{cor:min-hybrid}
  For any $\Glue(\Vec)$-"language" accepted by some
  $\GlueFin(\VecFin)$-"automaton" there exists a minimal
  $\GlueFin(\VecFin)$-"automaton". In particular, any
  $\Vec$-"language" accepted by a $\VecFin$-automaton has a minimal
  $\GlueFin(\VecFin)$-"automaton".
\end{corollary}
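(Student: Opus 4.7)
The plan is a direct application of the machinery of Sections \ref{section:categorical automata} and \ref{section:gluing of categories}. I would first verify that the pair $(\Vec, \VecFin)$ satisfies all the hypotheses of Theorem \ref{lem:fact-syst-through-mono-diag-fin-S}. The category $\Vec$ is both complete and cocomplete, admits arbitrary intersections of subspaces and all pullbacks, and carries the standard (strong epi, mono) factorization system into surjective and injective linear maps. Checking that $\VecFin$ is well-behaved reduces to a handful of routine linear-algebraic observations: injective linear maps are monomorphisms in $\Vec$; a surjective image of a finite-dimensional space is finite-dimensional; the binary intersection of two finite-dimensional subspaces of an ambient space is finite-dimensional; pullbacks of injective maps along morphisms out of finite-dimensional spaces are finite-dimensional subobjects; and every strictly descending chain of subspaces of a finite-dimensional space stabilises. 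Together with Lemma \ref{lem:well-founded-and-interesec}, these observations establish well-behavedness of $\VecFin$ in $\Vec$.

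Invoking Theorem \ref{lem:fact-syst-through-mono-diag-fin-S} then yields a factorization system through $\GlueFin(\VecFin)$ on $\Glue(\Vec)$. Next I would use Lemma \ref{lem:monoDiag-reflect} to promote the completeness and cocompleteness of $\Vec$ to $\Glue(\Vec)$, so that the coproducts $\coprod_{w \in \alphabetA^*} I$ and products $\prod_{w \in \alphabetA^*} F$ exist in $\Glue(\Vec)$ for any choice of input and output objects $I$ and $F$. Lemma \ref{lem:initial-and-final-automata} therefore provides both an initial and a final $\Glue(\Vec)$-automaton accepting $L$. The hypothesis that $L$ is accepted by some $\GlueFin(\VecFin)$-automaton forces the unique morphism from the initial to the final automaton to be $\GlueFin(\VecFin)$-small, and Lemma \ref{lem:factorization-though-automata} lifts the factorization system to the category of automata. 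Applying Theorem \ref{thm:minimization-through} with $\categoryC = \Glue(\Vec)$ and $\categoryS = \GlueFin(\VecFin)$ then produces the minimal $\GlueFin(\VecFin)$-automaton for $L$.

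For the second assertion, I would use the evident fully faithful embedding $\Vec \hookrightarrow \Glue(\Vec)$ sending a vector space $V$ to the one-object diagram picking out $V$, which restricts to an embedding $\VecFin \hookrightarrow \GlueFin(\VecFin)$. Under these embeddings, any $\Vec$-language becomes a $\Glue(\Vec)$-language and any $\VecFin$-automaton for it becomes a $\GlueFin(\VecFin)$-automaton, so the statement reduces to the first part.

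The argument is essentially a bookkeeping exercise, so the main difficulty I anticipate is not conceptual but one of ensuring that the successive factorization systems and embeddings compose compatibly. The genuinely substantial content has already been packaged into Theorems \ref{lem:fact-syst-through-mono-diag-fin-S} and \ref{thm:minimization-through}; once well-behavedness of $(\Vec, \VecFin)$ is in place, the corollary follows by their successive application.
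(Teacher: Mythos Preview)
Your proposal is correct and follows essentially the same route as the paper, which derives the corollary by instantiating Theorem~\ref{lem:fact-syst-through-mono-diag-fin-S} at $(\categoryC,\categoryS)=(\Vec,\VecFin)$, invoking Lemma~\ref{lem:monoDiag-reflect} for (co)completeness of $\Glue(\Vec)$, and then applying Theorem~\ref{thm:minimization-through}. Your version is simply more explicit: you spell out the linear-algebraic verifications underlying well-behavedness of $\VecFin$ in $\Vec$, unpack the use of Lemmas~\ref{lem:initial-and-final-automata} and~\ref{lem:factorization-though-automata} that are implicit in the assumptions of Theorem~\ref{thm:minimization-through}, and make the embedding $\Vec\hookrightarrow\Glue(\Vec)$ explicit for the ``in particular'' clause.
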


For the language described in Section~\ref{section:informal
  presentation}, and for which the minimal vector automaton has a
two-dimensional state space, we obtain a minimal
$\GlueFin(\VecFin)$-"automaton" obtained by glueing two
one-dimensional spaces at $0$. Formally, this is an "$\Mono$-diagram"
$\functorF\colon\categoryD\to\VecFin$ where $\categoryD$ is a three
object poset $\{\bot,0,1\}$ with $\bot\le 0$ and $\bot\le 1$. The
functor $\functorF$ maps $0,1$ to one-dimensional spaces, $\bot$ to
the zero-dimensional space and the morphisms $\bot\le 0$ and $\bot\le
1$ to its inclusions in the respective one-dimensional spaces.

For another example, consider the "language" which to a word~$u\in
a^*$ associates the value $\cos(\alpha|u|)$ for some~$\alpha$ which is
not a rational multiple of~$\pi$, and whose minimal "vector space
automaton" has a two-dimensional state space. If we used the
factorization in $\Glue(\Vec)$ we would obtain a glueing of infinitely
many one-dimensional subspaces (obtained by rotations with angle
$\alpha$). Thus, it is crucial for our setting to use the
\emph{"factorization system through"} $\GlueFin(\VecFin)$. In this
case, the minimal $\GlueFin(\VecFin)$-"automaton" also has just a
two-dimensional vector space of states.

\section{Conclusion}
\label{section:conclusion}
We have introduced a new way to construct automata which, thanks to
category-theoretic insights, admits minimal automata `by design'.  The
introductory example of \kl{hybrid set-vector automata} is a
convincing instance of this approach, which has both algorithmic
merits (in succinctness of the encoding of the state space) and
theoretical merits (in that there exists minimal automata). The
closest work to our knowledge is the work of Lombardy and Sakarovitch
\cite{LombardyS06} which studies the sequentialisability of weighted
automata; in the framework of this paper, this is answering the
question whether a "vector space automaton" is equivalent to a "hybrid
set-vector automaton" for which the state space consists of dimension
1 vector spaces only, glued at 0 (the problem remains open).

At the categorical level, we should say a few words regarding our
design choices. First why not use more familiar co-completions such as
the Ind-completion of the free co-completion? The answer is that if we
did so, we would not obtain the desired behaviour when we restrict our
attention to the `finite' automata. For example finite filtered
colimits are not very interesting, while the freely added finite
colimits of vector spaces are not closed under quotients in the free
co-completion, thus the work in the previous sections cannot be
applied.

Another question one may ask is why we haven't used coalgebras, as
in~\cite{AdamekBHKMS:2012} or in the work
of~\cite{AdamekEtAl:Well-Pointed-CoAlg} on well-pointed coalgebras.
First, the factorization through a subcategory, which plays a crucial
role in our work, is not developed in that setting. Secondly, we
believe that the functorial approach to automata, which neatly
combines the dual narrative of automata seen as both algebras and
coalgebras is worth saying. As we show
in~\cite{ColcombetPetrisanCALCO17}, we can employ this framework for
minimizing subsequential transducers \` a la
Choffrut~\cite{Choffrut79} (by interpreting them as automata in a
Kleisli category). This is also an example in which the conditions in
Lemma~\ref{lem:initial-and-final-automata} are not necessary. We
believe, that at least in that situation the functorial approach works
slightly smoother than the coalgebraic one~\cite{Hansen10}. Also, by
changing the input category, we can further extend this work to
capture tree automata or algebras (for instance monoids).

In the particular model of "hybrid set-vector automata" the problem
of effectiveness remains: we have proved the existence of a minimal
automaton for a language, but obtaining the reachable configurations
in an effective way is the subject of ongoing work.

\bibliography{refs}

\begin{thebibliography}{10}

\bibitem{AdamekEtAl:Well-Pointed-CoAlg}
Jir{\'{\i}} Ad{\'{a}}mek, Stefan Milius, Lawrence~S. Moss, and Lurdes Sousa.
\newblock Well-pointed coalgebras.
\newblock {\em Logical Methods in Computer Science}, 9(3), 2013.

\bibitem{AdamekBHKMS:2012}
Ji\v{r}\'{\i} Ad\'{a}mek, Filippo Bonchi, Mathias H\"{u}lsbusch, Barbara
  K\"{o}nig, Stefan Milius, and Alexandra Silva.
\newblock A coalgebraic perspective on minimization and determinization.
\newblock In {\em Proceedings of the 15th International Conference on
  Foundations of Software Science and Computational Structures}, FOSSACS'12,
  pages 58--73, Berlin, Heidelberg, 2012. Springer-Verlag.
\newblock URL: \url{http://dx.doi.org/10.1007/978-3-642-28729-9_4}, \href
  {http://dx.doi.org/10.1007/978-3-642-28729-9_4}
  {\path{doi:10.1007/978-3-642-28729-9_4}}.

\bibitem{ArbibManes75}
Michael~A. Arbib and Ernest~G. Manes.
\newblock Adjoint machines, state-behavior machines, and duality.
\newblock {\em Journal of Pure and Applied Algebra}, 6(3):313 -- 344, 1975.
\newblock URL:
  \url{http://www.sciencedirect.com/science/article/pii/0022404975900286},
  \href {http://dx.doi.org/http://dx.doi.org/10.1016/0022-4049(75)90028-6}
  {\path{doi:http://dx.doi.org/10.1016/0022-4049(75)90028-6}}.

\bibitem{Bezhanishvili2012}
Nick Bezhanishvili, Clemens Kupke, and Prakash Panangaden.
\newblock {\em Minimization via Duality}, pages 191--205.
\newblock Springer Berlin Heidelberg, Berlin, Heidelberg, 2012.
\newblock URL: \url{http://dx.doi.org/10.1007/978-3-642-32621-9_14}, \href
  {http://dx.doi.org/10.1007/978-3-642-32621-9_14}
  {\path{doi:10.1007/978-3-642-32621-9_14}}.

\bibitem{Bonchi2012}
Filippo Bonchi, Marcello Bonsangue, Michele Boreale, Jan Rutten, and Alexandra
  Silva.
\newblock A coalgebraic perspective on linear weighted automata.
\newblock {\em Inf. Comput.}, 211:77--105, February 2012.
\newblock URL: \url{http://dx.doi.org/10.1016/j.ic.2011.12.002}, \href
  {http://dx.doi.org/10.1016/j.ic.2011.12.002}
  {\path{doi:10.1016/j.ic.2011.12.002}}.

\bibitem{BonchiBHPRS14}
Filippo Bonchi, Marcello~M. Bonsangue, Helle~Hvid Hansen, Prakash Panangaden,
  Jan J. M.~M. Rutten, and Alexandra Silva.
\newblock Algebra-coalgebra duality in {B}rzozowski's minimization algorithm.
\newblock {\em {ACM} Trans. Comput. Log.}, 15(1):3:1--3:29, 2014.

\bibitem{Choffrut79}
Christian Choffrut.
\newblock A generalization of {G}insburg and {R}ose's characterization of
  {G-S-M} mappings.
\newblock In {\em {ICALP}}, volume~71 of {\em Lecture Notes in Computer
  Science}, pages 88--103. Springer, 1979.

\bibitem{ColcombetPetrisanCALCO17}
Thomas Colcombet and Daniela Petri{\c s}an.
\newblock Automata minimization: a functorial approach.
\newblock {\em CALCO}, 72:8:1--8:15, 2017.

\bibitem{DayLack2007}
Brian~J. Day and Stephen Lack.
\newblock Limits of small functors.
\newblock {\em Journal of Pure and Applied Algebra}, 210(3):651 -- 663, 2007.
\newblock URL:
  \url{http://www.sciencedirect.com/science/article/pii/S0022404906002635},
  \href {http://dx.doi.org/http://dx.doi.org/10.1016/j.jpaa.2006.10.019}
  {\path{doi:http://dx.doi.org/10.1016/j.jpaa.2006.10.019}}.

\bibitem{goguen1972}
J.~A. Goguen.
\newblock Minimal realization of machines in closed categories.
\newblock {\em Bull. Amer. Math. Soc.}, 78(5):777--783, 09 1972.
\newblock URL: \url{http://projecteuclid.org/euclid.bams/1183533991}.

\bibitem{Hansen10}
Helle~Hvid Hansen.
\newblock Subsequential transducers: a coalgebraic perspective.
\newblock {\em Inf. Comput.}, 208(12):1368--1397, 2010.

\bibitem{LombardyS06}
Sylvain Lombardy and Jacques Sakarovitch.
\newblock Sequential?
\newblock {\em Theor. Comput. Sci.}, 356(1-2):224--244, 2006.
\newblock \href {http://dx.doi.org/10.1016/j.tcs.2006.01.028}
  {\path{doi:10.1016/j.tcs.2006.01.028}}.

\bibitem{schutzenberger61}
M.P. Schützenberger.
\newblock On the definition of a family of automata.
\newblock {\em Information and Control}, 4(2):245 -- 270, 1961.
\newblock URL:
  \url{http://www.sciencedirect.com/science/article/pii/S001999586180020X},
  \href {http://dx.doi.org/http://dx.doi.org/10.1016/S0019-9958(61)80020-X}
  {\path{doi:http://dx.doi.org/10.1016/S0019-9958(61)80020-X}}.

\end{thebibliography}
\end{document}